\let\oldenumerate\enumerate
\renewcommand{\enumerate}{
  \oldenumerate
  \setlength{\itemsep}{0pt}
  \setlength{\parskip}{0pt}
  \setlength{\parsep}{0pt}
}
\newcommand{\comment}[1]{}
\newcommand{\sv}{{\mathcal V}}
\newcommand{\se}{{\mathcal E}}
\newenvironment{proof}{\noindent {\bf Proof:}~}{\hspace*{\fill}\(\Box\)}
\newenvironment{proofSketch}{\noindent{\bf Proof Sketch:}}{\hspace*{\fill}\(\Box\)}
\newtheorem{theorem}{Theorem}
\newtheorem{claim}{Claim}
\newtheorem{definition}{Definition}
\newtheorem{lemma}{Lemma}
\def\noflash#1{\setbox0=\hbox{#1}\hbox to 1\wd0{\hfill}}
\newcommand{\scripte}{\mathcal{E}}
\newcommand{\scriptv}{\mathcal{V}}
\newcommand{\Zightarrow}{\rightarrow}
\begin{document}
\title{Crash-Tolerant Consensus in Directed Graphs\footnote{\normalsize This research is supported in part by National Science Foundation awards 1329681. Any opinions, findings, and conclusions or recommendations expressed here are those of the authors and do not necessarily reflect the views of the funding agencies or the U.S. government.}}

\author{Lewis Tseng$^{1}$ and Nitin Vaidya$^{2}$\\~\\
 \normalsize $^1$ Department of Computer Science,\\
 \normalsize $^2$ Department of Electrical and Computer Engineering, 
% and\\ \normalsize $^3$ Coordinated Science Laboratory
\\ \normalsize University of Illinois at Urbana-Champaign\\~\\ \normalsize Email: \{ltseng3, 
%nhv\}@illinois.edu \\~\\ Phone: +1 217-244-6024, +1 217-265-5414}
nhv\}@illinois.edu \\ \normalsize  Phone: +1 217-244-6024, +1 217-265-5414
%\\ \normalsize Fax:      +1 217-244-5685 
\\ \normalsize Mailing address: Coordinated Science Lab., 1308 West Main St., Urbana, IL 61801, U.S.A.} % \\~\\ \normalsize Technical Report
%}

\date{December 30, 2014\footnote{Revised January 1, 2014 to make minor improvements to the presentation.}}
\maketitle

%\noindent 
%{\bf This is a brief announcement submission.}

%{\bf This is a submission for a regular paper. Lewis Tseng is a full time Ph.D student in University of Illinois at Urbana-Champaign. This paper should be considered for the best student paper award.}

~

~

\begin{abstract}
\normalsize

This work considers a point-to-point network of $n$ nodes connected by {\em directed} links, and proves {\em tight} necessary and sufficient conditions on the underlying communication graphs for
achieving consensus among these nodes under {\em crash} faults. We identify the conditions in both synchronous and asynchronous systems.

%~

%For Byzantine consensus in synchronous systems, prior work has developed analogous necessary and sufficient conditions for {\em undirected} graphs \cite{impossible_proof_lynch,dolev_82_BG}. Recently, we have identified {\em tight} necessary and sufficient conditions for {\em directed} graphs \cite{Tseng_Exact}. Independently, \cite{approximate_consensus_dynamic} proposed tight condition for approximate consensus in {\em dynamic} graphs. However, these conditions are not adequate to completely characterize crash-tolerant consensus in {\em directed} graphs.

\end{abstract}

~

~

%{\bf\thanks{\normalsize This paper is a regular submission.}

~

%\thanks{\normalsize Lewis Tseng is a full-time Ph.D. student at the University of Illinois at Urbana-Champaign. This paper should be considered for the best student paper award.}}

\thispagestyle{empty}
\newpage
\setcounter{page}{1}

\section{Introduction}
\label{s_intro}

In this work, we explore algorithms for achieving consensus in the presence of crash faults \cite{AA_nancy,welch_book}. We assume a point-to-point network, which is modeled as a {\em directed} graph, i.e., the communication links between neighboring nodes are not necessarily bi-directional. We consider both synchronous and asynchronous systems. %With the exception of Section \ref{s:async}, we consider synchronous systems.

The crash consensus problem \cite{AA_nancy,welch_book} considers $n$ nodes, of which at most $f$ nodes may crash. The faulty nodes may fail stop at any point of time. We do not assume Byzantine behavior \cite{psl_BG_1982} in this work. A crash consensus algorithm is {\em correct} if it satisfies the following three properties:

\begin{itemize}
\item \textbf{Agreement}: the output (i.e., decision) at all the fault-free nodes is identical.

\item \textbf{Validity}: the output at any fault-free node must be some node's input. 

\item \textbf{Termination}: every fault-free node eventually decides on an output.

\end{itemize}

This paper presents {\em tight} necessary and sufficient conditions for crash consensus in {\em directed} graphs.

\subsection{Related Work}

Lamport, Shostak, and Pease introduced the Byzantine consensus problem in \cite{lamport_agreement}. Subsequently, researchers also explored the consensus problem in the presence of crash faults \cite{welch_book,AA_nancy}. It has been shown that the lower bound on the round complexity is $f+1$, and $f+1$ nodes are sufficient for solving crash consensus \cite{welch_book,AA_nancy}. For undirected graphs, it is easy to see that $f+1$ node connectivity is both necessary and sufficient for crash consensus.

For Byzantine consensus in undirected graphs, \cite{impossible_proof_lynch, dolev_82_BG} showed that $2f+1$ node connectivity is both necessary and sufficient. Recently, we identified tight conditions for Byzantine consensus problem in {\em directed} graphs \cite{Tseng_Exact}.
For link failures in complete graphs, Schmid, Weiss, and Keidar proved impossibility results and lower bound on the number of nodes for synchronous consensus under {\em transient Byzantine link} faults \cite{Schmid_link}; however, the nodes are always fault-free. Many effort has also been devoted to characterizing tight conditions for other related problems. Please refer to our prior work \cite{Tseng_Exact} for more details.

For crash faults, Charron-Bost et al. proved tight conditions for {\em approximate} consensus in dynamic graphs \cite{approximate_consensus_dynamic}, where the graphs may change continually and unpredictably, in synchronous and partially-synchronous systems. Our work considers {\em exact} and {\em approximate} consensus in synchronous and asynchronous systems, respectively. Moreover, we assume the communication graph is {\em static}.

%Our multi-valued consensus algorithm uses the binary consensus algorithm as a building block. There have been attempts on reducing multi-valued consensus into binary consensus. However, the prior techniques do not directly apply to our cases, since they either assume Byzantine faults or asynchronous systems, and it is not clear whether the communication primitives used work correctly in directed graphs.
\subsection{Network Model}

Sections \ref{s:sync} and \ref{s:iterative} assume synchronous systems, and section \ref{s:async} considers asynchronous systems. The underlying communication network is {\em static}, i.e., it does not change over time.
The communication network consisting
of $n$ nodes is modeled as a simple {\em directed} graph $G(\scriptv,\scripte)$, where $\scriptv$ is the set of $n$ nodes, and $\scripte$ is the set of directed edges between the nodes in $\scriptv$. We assume that $n\geq 2$, since the consensus problem for $n=1$ is trivial.  
Node $i$ can transmit messages to another node $j$ if and only if the directed edge $(i,j)$ is in $\scripte$. Also, each node can send messages to itself as well. 
For node $i$, let $N_i^-$ be the set of nodes from which $i$ can receive messages. 
That is, $N_i^- = \{\, j ~|~ (j,i)\in \scripte\, \} \cup \{i\}$.
Define $N_i^+$ as the set of nodes that can receive messages from node $i$.%the union of $\{i\}$ and the set of nodes to which node $i$ has outgoing edges. 
That is, $N_i^+ = \{\, j ~|~ (i,j)\in \scripte\, \} \cup \{i\}$. %Intuitively, $N_i^+$ includes all the nodes that can receive messages from node $i$. %For simplicity, we do not assume that $i \in N_i^+$. However, each node $i$ can send a message to itself, and thus, node $i$ has a path to itself as well.

All the communication links are reliable,
FIFO (first-in first-out) and deliver each transmitted message exactly once.

\section{Synchronous Systems}
\label{s:sync}

\subsection{Necessary Condition}
\label{s:sync_nec}

All the paths we discuss in the paper are directed paths. We first introduce some useful definitions. A reduced graph $G_F$ for a graph $G(\sv, \se)$ is a subgraph induced by vertex subset $\sv - F$ where $F$ is a potential fault set. The formal definition is presented below.

\begin{definition} {\bf (Reduced Graphs)}
For a given graph $G(\sv, \se)$, and a given parameter $k$, and each set $F \subset V$ such that $|F| \leq k$, reduced graph $G_{F}(\sv_{F}, \se_{F})$ is defined as follows: (i) 
$\sv_{F} = \sv - F$, and (ii) $\se_{F}$ is obtained by removing from $\se$ all the links incident on the nodes in $F$. That is, $\se_{F} = \se - \{(i, j) \in \se~|~i \in F \text{or}~~j \in F\}$.
\end{definition}

We define a fault-tolerant version of node connectivity over a directed graph, which extends the traditional notion of node connectivity (or vertex connectivity) \cite{Graph_theory_west}.

\comment{+++++++
\begin{definition} {\bf (Node Connectivity)} 
\label{def:node}
A graph $G(\sv, \se)$ is said to satisfy $k$ node connectivity if $G$ has more than $k$ nodes, and for any $F \subset \sv$ with $|F| \leq k$, $G_{F}$ remains (weakly) connected.
\end{definition}
+++++++}

\begin{definition} {\bf (Crash-Tolerant Node Connectivity)}
A graph $G(\sv, \se)$ is said to satisfy $k$ Crash-Tolerant Node Connectivity (CT node connectivity) if for any $F \subset \sv$ such that $|F| \leq k$, there is a single node $s \in \sv - F$ that has paths to all the nodes in $G_{F}$. %In other words, there is a spanning tree in $G_{F}$ with $s$ being the root.
\end{definition}
Recall that by assumption, we assume that $i \in N_i^+$ and $N_i^-$, and hence, $i$ has a path to itself as well. The traditional notion of node connectivity \cite{Graph_theory_west}, some reduced graph may not have a node that can reach all the nodes, since it only requires the reduced graph to be weakly connected. %there may be a pair of nodes $i, j$ such that there is no path from $i$ to $j$ or from $j$ to $i$ in some reduced graph. However, Crash-Tolerant (CT) node connectivity requires the existence of a rooted tree in every reduced graph.

%Now, we present an equivalent condition that is more intuitive.

\begin{definition} {\bf (Directed Rooted Spanning Tree)}
\label{def:tree}
A spanning tree of a graph $H(\sv, \se)$ is said to be a directed rooted spanning tree if there is a single root in the spanning tree that has directed paths to all the nodes in $\sv$.
\end{definition}

It should be easy to see that $k$ CT node connectivity is equivalent to the condition that \textit{given any reduced graph $G_F$, there exists a directed rooted spanning tree}. Charron-bost et al. \cite{approximate_consensus_dynamic} also use the notion of rooted spanning tree to specify the tight condition for achieving approximate consensus in dynamic networks.%Note that in the traditional notion of node connectivity (as per Definition \ref{def:node}), there may be a pair of nodes $i, j$ such that there is no path from $i$ to $j$ or from $j$ to $i$ in some reduced graph. However, Crash-Tolerant (CT) node connectivity requires the existence of a rooted tree in every reduced graph. %With slight abuse of terminology, we will use the terms root and source interchangeably.

\begin{definition} {\bf (Source)}
\label{def:source}
Given a reduced graph $G_F$, a node $s$ is said to be the {\em source} of $G_F$ if there exists a directed rooted spanning tree with $s$ being the root.
\end{definition}
With a slight abuse of terminology, we will use the terms root and source interchangeably. %There may exist several sources in a given reduced graph.

~

Based on CT node connectivity, the following theorem presents the necessary condition.

\comment{+++++++++
\begin{theorem}
\label{thm:nc}
If exact consensus is possible in $G(\scriptv,\scripte)$ with at most $f$ crash faults, then for all $F \subset \scriptv$ and $|F| \leq f$, the reduced graph $G_F$ contains at least one directed rooted spanning tree.
\end{theorem}
+++++++++++}

\begin{theorem}
\label{thm:nc}
If exact consensus is possible in $G(\scriptv,\scripte)$ with at most $f$ crash faults, then $G(\sv, \se)$ satisfies $f$ CT node connectivity.
\end{theorem}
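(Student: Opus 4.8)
The plan is to prove the contrapositive by an indistinguishability (scenario-merging) argument: assuming a correct consensus algorithm exists on $G$, I will show that $G$ must satisfy $f$ CT node connectivity. So suppose for contradiction that $f$ CT node connectivity fails. Then there is a fault set $F \subset \sv$ with $|F| \le f$ such that the reduced graph $G_F$ has no source, i.e., no single node reaches all of $\sv_F$. Passing to the condensation of $G_F$ (the DAG of its strongly connected components), a standard fact about DAGs shows that a \emph{unique} source component would reach every vertex, contradicting the absence of a source in $G_F$; hence the condensation has at least two source strongly connected components. I fix two of them and call the corresponding vertex sets $L$ and $R$. These are nonempty and disjoint, and, being source components, neither receives any edge of $G_F$ from outside itself.

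The key structural observation is that when every node of $F$ crashes at the very start of an execution, sending no messages at all, the nodes of $L$ receive messages only from within $L$, and likewise the nodes of $R$ receive messages only from within $R$. Indeed, any edge of $G$ entering $L$ either originates in $F$ (and carries no message, since $F$ is silent) or is an edge of $G_F$ entering $L$ from $\sv_F \setminus L$, of which there are none because $L$ is a source component; the same holds for $R$. Consequently the entire local history of the nodes in $L$ is determined by the inputs of the $L$-nodes alone, and, independently, the history of the $R$-nodes is determined by the inputs of the $R$-nodes alone.

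With this in hand I construct three executions, each having exactly the set $F$ crash at the start (so at most $f$ faults, and every node of $\sv_F$ fault-free). In execution $A'$ every node has input $0$; by Validity and Agreement every fault-free node, in particular every node of $L$, decides $0$. In execution $B'$ every node has input $1$; symmetrically every node of $R$ decides $1$. In the critical execution $C$, I assign input $0$ to all nodes of $L$, input $1$ to all nodes of $R$, and arbitrary inputs to the remaining nodes. By the structural observation the view of the $L$-nodes in $C$ coincides with their view in $A'$ (same $L$-inputs, same silence of $F$, no incoming messages from elsewhere), so the $L$-nodes decide $0$ in $C$; likewise the $R$-nodes decide $1$ in $C$. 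But $L$ and $R$ are both fault-free in $C$, so two fault-free nodes decide differently, violating Agreement, which is the desired contradiction.

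I expect the main obstacle to be making the structural observation airtight: one must verify that crashing $F$ at the start truly severs all message flow into $L$ and into $R$, which requires combining the removal of $F$-incident edges (built into the definition of $G_F$) with the source-component property, and, in the synchronous setting, checking that the absence of messages from the silent $F$-nodes is detected identically in $C$ and in the matching all-$0$ and all-$1$ executions, so that the views genuinely coincide round by round. The reduction from ``no single source'' to ``at least two source components'' is routine but should be stated explicitly, since it is precisely what produces the two isolated sets $L$ and $R$ that drive the whole argument.
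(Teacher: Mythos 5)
Your proposal is correct, and its skeleton matches the paper's proof: negate $f$ CT node connectivity to obtain a fault set $F$ with $|F|\leq f$ and two disjoint, non-empty subsets of $\sv - F$ that receive no edges of $G_F$ from outside themselves; crash $F$ at the start; give the two sets inputs $0$ and $1$; conclude that agreement fails. The differences are in how the two sets are produced and how the final contradiction is extracted. The paper takes two witness nodes $i,j$ that no single node can reach simultaneously in $G_F$ and uses the ancestor sets $S_i$, $S_j$ (nodes with paths to $i$, resp.\ $j$, in $G_F$), which are in-closed by construction; you instead pass to the condensation of $G_F$ and take two source strongly connected components, which needs the extra (routine) reduction from ``no source node'' to ``at least two source components'' but then gives the no-incoming-edges property for free, whereas the paper must separately observe that its residual set $R$ has no paths into $S_i \cup S_j$. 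More notably, the paper argues in a \emph{single} mixed execution that a node of $S_i$ ``only ever learns the input $0$, so validity forces it to output $0$''; strictly speaking, validity as stated only requires the output to be \emph{some} node's input, so in the mixed execution it does not by itself pin down what $S_i$ decides. Your three-execution indistinguishability argument (all-$0$, all-$1$, and mixed, with $F$ silent in each, so the views of $L$-nodes and $R$-nodes coincide round by round with the corresponding uniform execution) is exactly the formalization that closes this small informality, at the cost of slightly heavier machinery. Both proofs are sound; yours is the more rigorous rendering of the same indistinguishability idea.
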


\begin{proof}
The proof is by contradiction. Suppose that there exists a consensus algorithm, and $G(\sv, \se)$ does not satisfy $f$ CT node connectivity. Thus, there exists a set $F \subset \sv$ with $|F| \leq f$, and a pair of nodes $i, j \not\in F$ such that there is no node $s$ that has paths from $s$ to both $i$ and $j$. Note that by assumption, each node $i$ has a path to itself. %(i) there is no path from $i$ to $j$ or from $j$ to $i$ in the reduced graph $G_{F}$; and (ii) there does not exist a node $s$ such that there exist paths from $s$ to both $i$ and $j$. 

For the reduced graph $G_F$ and a node $x$ in $\sv - F$, define $S_x$ as the set of all nodes that have paths to node $x$ in $G_F$. Note that $S_x$ contains $x$ as well, because $x$ has a path to itself. %define the {\em source component} of $x$ in $H$ as the set of all nodes that have paths to node $x$ in $H$. Denote by $S_i$ as the union of $i$ and the {\em source component} of $i$ in $G_{F}$. Similarly, define $S_j$ to be the union of $j$ and the source component of $j$ in $G_{F}$. 
By assumption, $S_i$ and $S_j$ are disjoint. Moreover, there is no path from any node in $S_i$ to any node in $S_j$ in $G_{F}$, and vice versa, since otherwise, there exists some node that can reach both nodes $i$ and $j$, which contradicts with the assumption. Then, $V$ can be partitioned into disjoint sets $F, S_i, S_j, R$, where $F, S_i$ and $S_j$ are defined as above, and $R$ contains the remaining nodes, i.e., $R = \sv - F - S_i - S_j$. Then, we make the following observations:

\begin{itemize}
\item $F$ and $R$ may be empty, but $S_i$ and $S_j$ are non-empty, since $i \in S_i$ and $j \in S_j$.

\item Nodes in $R$ (if non-empty) have no path to nodes in $S_i \cup S_j$ in $G_F$ by definition. This is because if some node $r \in R$ can reach some node in $S_i$ or $S_j$ in $G_F$, then by definition, $r$ should also be in $S_i$ or $S_j$, respectively. This contradicts with the assumption. %due to the definition of source component.
\end{itemize}

Now, consider an execution of the consensus algorithm where $F$ (if non-empty) are the faulty nodes which crash before the start of the algorithm. All the other nodes are assumed to be fault-free. This is possible, since by assumption, $|F| \leq f$. Also, suppose that nodes in $S_i$ and nodes in $S_j$ have distinct input values. Without loss of generality, assume that nodes in $S_i$ have input $0$ and nodes in $S_j$ have input $1$. Nodes in $R$ have input either $0$ or $1$. 

Consider a node $x$ in $S_i$. Since in $G_F$, there is no path from $S_j \cup R$ to nodes in $S_i$, the only input value learned by $x$ throughout the execution of the algorithm is $0$, and to satisfy validity property, $0$ should be the output of $x$. Similarly, a node $y$ in $S_j$ can only learn $1$ throughout the execution of the algorithm, and thus, $1$ should also be the output of $y$. Note that by assumption, both $S_i$ and $S_j$ are non-empty, and fault-free. Therefore, the fact that $S_i$ and $S_j$ agree on different outputs violates the agreement property of the algorithm, a contradiction.
\end{proof}

\subsection{Sufficiency}
\label{s:sync_suff}

In this section, we propose a consensus algorithm in graphs that satisfy $f$ CT node connectivity. This section assumes that each node has a binary input. In Section \ref{s:mvc}, we discuss how to extend the algorithm to solve multi-valued consensus. Note that the existence of such correct consensus algorithm proves the following theorem.

\comment{++++++++
\begin{theorem}
\label{thm:sc}
Given $G(\sv, \se)$, if for all $F \subset \sv$ and $|F| \leq f$, $G_F$ contains a directed rooted spanning tree, then consensus is achievable in $G$ with at most $f$ crash faults.
\end{theorem}
++++++++++}

\begin{theorem}
\label{thm:sc}
If $G(\sv, \se)$ satisfies $f$ CT node connectivity, then binary consensus is achievable in $G$ with at most $f$ crash faults.
\end{theorem}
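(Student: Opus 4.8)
The plan is to prove the theorem constructively, by exhibiting a binary consensus algorithm and verifying agreement, validity, and termination whenever $G(\sv,\se)$ satisfies $f$ CT node connectivity. Before describing it, I would first note \emph{why} the obvious approach fails: if each node simply floods every value it has seen and then decides by (say) the minimum, agreement can break. A node may crash in the middle of a round, delivering its value to some fault-free neighbors but not to others, so a spurious value can reach one fault-free node and never reach another. The algorithm must therefore use $f$ CT node connectivity to let the value of a \emph{source} --- the node guaranteed by the hypothesis to reach everyone in the relevant reduced graph --- override such transient values.

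Concretely, I would run a synchronous flooding algorithm for a fixed number of rounds $R$ chosen large enough to exceed the length of any directed path in any reduced graph (a bound polynomial in $n$ and $f$ suffices), so that termination is immediate. Each node maintains its current estimate together with a record of which values it has received and in which rounds; the {\Propagate} primitive spreads this information and the {\Equality} primitive lets a node detect when it may safely commit. The essential design point is the decision rule: because the system is synchronous, a fault-free source keeps transmitting in every round, whereas a node that crashes falls silent afterward, so a value coming from a live source is \emph{continually reinforced} while a value injected by a crashing node appears only transiently. The rule should adopt the continually-reinforced value, which I will argue is common to all fault-free nodes.

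The heart of the proof --- and the step I expect to be the main obstacle --- is the following agreement lemma. Let $F$ be the set of \emph{all} nodes that crash at any point during the execution, so $|F|\le f$, and consider $G_F$. By $f$ CT node connectivity (equivalently, by the existence of a directed rooted spanning tree in $G_F$), there is a source $s$ that reaches every node of $\sv-F$ along paths lying entirely within $\sv-F$, i.e. through fault-free nodes only. Since $R$ exceeds every path length and all nodes on these paths transmit in every round, the value of the source component propagates to every fault-free node and is continually reinforced. To rule out disagreement I would argue exactly as in the necessity proof: partition $\sv-F$ by the nodes' final decisions, and observe that a disagreement would require a pair of fault-free nodes with no common reliable ancestor inside $G_F$, which contradicts the presence of a source reaching all of $\sv-F$.

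The delicate part is matching the \emph{dynamic} execution --- crashes occurring mid-round, partial sends, and the algorithm never learning $F$ --- to the \emph{static} combinatorial object $G_F$, and in particular formalizing ``continually reinforced'' so that it provably filters out every transient value from a node in $F$ while preserving the source's value. Once the agreement lemma is established, validity is immediate, since the decided value is the input of some node in the source component and hence some node's input, and termination holds because the algorithm stops after the fixed $R$ rounds. The decision rule only needs to be well defined on $\{0,1\}$ here; the extension to multi-valued inputs is handled separately in Section~\ref{s:mvc}.
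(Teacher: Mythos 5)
Your proposal has a genuine gap at exactly the point you yourself flag as ``the delicate part'': the decision rule is never defined, and the notion of a ``continually reinforced'' value cannot do the work you assign to it. Suppose a node $x$ that is about to crash delivers its value $0$ to a single fault-free node $y$ and then falls silent. From that round on, $y$ (which never crashes) retransmits $0$ in every round, so $0$ is exactly as ``continually reinforced'' as any value originating at a live source---reinforcement cannot distinguish a value by its origin, because flooding launders transient values through fault-free carriers. Worse, in a directed graph $y$ need not have paths to all other fault-free nodes, so some fault-free nodes see the reinforced $0$ forever while others never see it at all; any rule of the form ``adopt the continually reinforced value'' is therefore not even well defined across nodes, and nodes applying it can disagree. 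Your agreement argument does not close this hole: the necessity proof derives its contradiction from \emph{validity}, because each of the two separated groups sees only its own input value, whereas here a fault-free node sees many values (including laundered ones). The existence of a source reaching both $i$ and $j$ in $G_F$ does not by itself force $i$ and $j$ to decide identically---arranging for the source's value to win at \emph{every} fault-free node is precisely what the algorithm must engineer, so invoking the source at this point assumes the conclusion.

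The paper supplies the missing mechanism with a different idea (Algorithm Min-Max). It runs $2f+2$ phases, each consisting of $d$ flooding rounds, alternating a Min phase (adopt the minimum value received) with a Max phase (adopt the maximum). Since at most $f$ nodes crash, by pigeonhole some consecutive pair $(p_t,p_{t+1})=$ (Min, Max) of phases is crash-free; let $F$ be the set of nodes crashed before $p_t$. The key structural fact is that the set of sources of $G_F$ is upstream-closed: any node with a path to a source is itself a source. For binary inputs this yields a dichotomy that replaces your undefined filtering rule. If some source of $G_F$ holds $0$ at the start of $p_t$, then $0$ floods from it to all of $\sv-F$ within $d$ crash-free rounds and the min rule makes every value $0$; otherwise every source holds $1$, upstream-closure guarantees no source can acquire a $0$ during $p_t$, and then in the crash-free Max phase $p_{t+1}$ the value $1$ floods from the sources and the max rule makes every value $1$. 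Note also that the repetition tied to $f$ (the $2f+2$ phases) is essential and absent from your single flooding pass: with crashes at adversarially chosen times, every fixed window of rounds can be contaminated unless there are at least $f+1$ disjoint windows, one of which must then be clean.
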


This theorem also implies that $f$ CT node connectivity is a {\em tight} condition for binary consensus. Section \ref{s:mvc} shows that $f$ CT node connectivity is sufficient for multi-valued consensus, as well. Therefore, $f$ CT node connectivity is a {\em tight} condition for consensus in the presence of $f$ crash faults.

\paragraph{Algorithm Min-Max}

For a graph $H$ that contains a directed rooted spanning tree, define $height(r,H)$ as the minimum height of all the spanning trees rooted at $r$ in $H$. That is, 

\[
height(r,H) = \min_{\text{all spanning tree}~T~\text{rooted at}~r~\text{in}~H} \text{height of}~T
\] 

Given a graph $G$, define the fault-tolerant diameter $d$ as follows:

\begin{equation}
\label{eq:d}
d := \max_{F \subset \sv,~~|F| \leq f} ~~\max_{\text{all roots}~s~\text{of}~G_{F}} height(s, G_F)
%d := \max_{F \subsetneq \sv,~~|F| \leq f} \max_{\text{all sources} s \in G_{F}} \max_{v \in \sv-F-\{s\}} dist(s,v, G_{F}),
\end{equation}

Due to the notion of directed rooted spanning tree (Definition \ref{def:tree}), given any reduced graph $G_F$, if no node in $\sv - F$ crashes, then the source of $G_F$ (Definition \ref{def:source}) is able to propagate a value to any other node in $\sv-F$ within $d$ rounds by performing flooding, i.e., a source broadcasts its value in the first round, and then in later rounds, all the nodes forward the value received in the current round.

Now, we present the code running at each node $i$.

~

\hrule

\vspace*{2pt}

\noindent {\bf Algorithm Min-Max}

\vspace*{4pt}

\hrule

%\vspace*{8pt}
\vspace*{4pt}

  \begin{itemize}
  \item Set $v_i$ to the input at node $i$.
  
  \item For Phase $p = 1$ to $2f+2$:

        ~~If $p~mod~2 = 0$, then repeat the following steps $d$ times ({\bf Min Phase}):

            \begin{enumerate}
            \item Broadcast $v_i$ to nodes in $N_i^+$.
            \item Receive the broadcast values from $N_i^-$.%incoming neighbors and node $i$ itself.
            \item Set $v_i$ to the {\bf minimum} value of all the values received.
            \end{enumerate}

        ~~Else, repeat the following steps $d$ times ({\bf Max Phase}:):

            \begin{enumerate}
            \item Broadcast $v_i$ to nodes in $N_i^+$.
            \item Receive the broadcast values from $N_i^-$. %from incoming neighbors and node $i$ itself.
            \item Set $v_i$ to the {\bf maximum} value of all the values received.
            \end{enumerate}

   \item Output $v_i$.
   \end{itemize}

\hrule
Note that by definition of, $i \in N_i^-$ and $N_i^+$, so in step 2 of each phase, $i$ can receive the message from itself.

\begin{theorem}
Algorithm Min-Max is correct for binary inputs in all the graphs that satisfy $f$ CT node connectivity.
\end{theorem}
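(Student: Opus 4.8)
The plan is to show that Algorithm Min-Max satisfies Agreement, Validity, and Termination, treating Agreement as the heart of the argument. Termination is immediate, since each node runs for exactly $2f+2$ phases of $d$ rounds each and then outputs. For Validity, observe that the inputs are binary and every assignment sets $v_i$ to a minimum or maximum over values currently held by $i$ and received from its in-neighbors (and $i\in N_i^-$, so $v_i$ itself is always a candidate); hence each $v_i$ stays in $\{0,1\}$ throughout. If all inputs equal some $b$, then every $v_i$ remains $b$ and the common output $b$ is a valid input; otherwise both $0$ and $1$ are inputs, so any binary output is valid.

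For Agreement I would first isolate a \emph{stability} observation: if at some phase boundary every node that is still fault-free holds the same value $b$, then this remains true forever. Indeed, a surviving node thereafter receives only $b$ from the still-alive nodes (crashed nodes transmit nothing) and from itself, so both the min and the max it computes equal $b$. Consequently it suffices to exhibit a \emph{single} phase boundary at which all currently fault-free nodes agree. To find one, group the phases into the $f+1$ consecutive pairs $(1,2),(3,4),\dots,(2f+1,2f+2)$; each pair consists of a Max phase followed by a Min phase. Call a pair \emph{clean} if no node crashes during either of its two phases (nodes that crashed in an earlier pair simply shrink the alive set).

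The key step is to prove that a clean pair forces agreement. During a clean pair the set $A$ of alive nodes is fixed and satisfies $|\sv - A|\le f$, so by $f$ CT node connectivity the reduced graph $G_{\sv-A}$ has a source $s$ that reaches every node of $A$ within $d$ rounds. I would then examine the value of $s$ at the end of the Max phase. If it is $0$, then throughout the clean Min phase $s$ retains $0$ and floods it to all of $A$ within $d$ rounds, so every alive node ends at $0$. If instead it is $1$, then some node $w$ holding $1$ at the start of the Max phase must have reached $s$ within $d$ rounds; since $s$ in turn reaches all of $A$, the node $w$ reaches all of $A$ as well and is therefore itself a source, whence $height(w,G_{\sv-A})\le d$ and $w$ floods $1$ to every alive node within the $d$ rounds of the Max phase, forcing all of $A$ to $1$. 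Either way, all fault-free nodes agree at the end of the clean pair.

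Finally, a counting argument closes the proof: each crash happens during exactly one phase and hence inside exactly one of the $f+1$ pairs, so at most $f$ pairs can be unclean and at least one pair is clean. By the key step that pair ends in agreement, and by stability the agreement is preserved through the remaining phases to the output. I expect the clean-pair lemma to be the main obstacle, and within it the case where $s$ becomes $1$: the crucial and non-obvious point is that any infecting $1$-node $w$ is forced to be a source (because it reaches $s$, which reaches everyone), so that the definition of $d$ as the maximum height over \emph{all} roots of every reduced graph guarantees $w$ can flood the whole of $A$ within one Max phase---an ordinary notion of diameter would not be enough here.
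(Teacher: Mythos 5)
Your proof is correct and takes essentially the same approach as the paper's: locate a crash-free pair of consecutive phases by a counting argument, and use the key fact that any node with a path to a source of the reduced graph is itself a source (hence floods within $d$ rounds, by the definition of $d$ as a maximum over all roots) to force agreement within that pair. The differences are only presentational---you pair phases as (Max, Min) and pivot on the source's value at the midpoint of the pair rather than on the sources' values at its start, and you make explicit the stability and pigeonhole steps that the paper leaves implicit.
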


\begin{proof}
Validity and termination properties are obvious, since $d$ is upper bounded by $n$. Now, we prove that the agreement property also holds assuming that the inputs are either $0$ or $1$.

Fix an execution of the algorithm. Since there are $2f+2$ phases. There must exists a pair of consecutive phases $p_t, p_{t+1}$ such that no node crashes in phases $p_t$ and $p_{t+1}$. Without loss of generality, let $p_t$ be the Min Phase and $p_{t+1}$ be the Max Phase.

Denote by $F$ the nodes that have crashed in the execution by the end of Phase $p_{t-1}$. Recall that the source of a reduced graph $H$ is defined as the root of the directed spanning tree in $H$ as per Definition \ref{def:source}. Consider two cases:

\begin{itemize}
\item Case I: if some source $s$ of the reduced graph $G_F$ has $v_s = 0$ at the beginning of phase $p_t$, then due to the definition of the source and fault-tolerant diameter $d$, by the end of phase $p_t$, every node $i \in \sv - F$ has received $0$ on a path from source $s$ and sets $v_i = 0$, since $p_t$ is a Min Phase.

\item Case II: if each source $s$ of the reduced graph $G_F$ has $v_s = 1$ at the beginning of phase $p_t$, then by the end of $p_t$, each source $s$ still has $v_s = 1$. Suppose by way of contradiction that each source $s$ of $G_F$ has $v_s = 1$ at the beginning of phase $p_t$, but by the end of $p_t$, some source $s'$ has $v_{s'} = 0$. By assumption, source $s'$ must receive $0$ on a path from some other {\em non-source} node $x$ in phase $p_t$. However, the fact that there exists a path from $x$ to $s'$ implies that $x$ is also a source in $G_F$, and $v_x = 0$ at the start of phase $p_t$. This is a contradiction. Now, observe that by the end of phase $p_t$, each source $s$ still has $v_s = 1$, and phase $p_{t+1}$ is the Max Phase. Therefore, by the end of $p_{t+1}$, every node $i \in \sv - F$ will receive $1$ on a path from source $s$ and sets $v_i = 1$.
\end{itemize}

In either case, agreement is achieved. This completes the proof.
\end{proof}

%\section{Synchronous Systems: Discussion}
%\label{s:diss}

%This section discusses two extensions: multi-valued consensus and properties of %iterative consensus algorithms in directed graphs. 

\subsection{Multi-valued Consensus}
\label{s:mvc}

It is easy to see that Algorithm Min-Max does not work correctly when the input is not binary, since the source(s) of some reduced graph may not have either maximum or minimum input value(s), and thus, the rest of the nodes cannot learn the value(s) of the source(s) in either Min or Max Phase. This section considers the consensus problem with input being in the range $[0, K]$, where $K \geq 1$. 

We present Algorithm MVC (Multi-Valued Consensus). %The idea is iteratively using Algorithm Min-Max to try to agree on each input. 
It consists of two loops: The OUTER-LOOP iterates over all possible inputs, and the INNER-LOOP is essentially Algorithm Min-Max with an extra step to update the tentative state. In Algorithm MVC, each node $i$ keeps track of two types of variables:

\begin{itemize}
\item $t_i$: This variable is the tentative state at each node. It is guaranteed that at any point of time, $t_i$ equals an input at some node. Moreover, if node $i$ enters OUTER-LOOP iteration $l$, then $t_i$ is set to be some input value that has been seen by node $i$ and is at least $l$.

\item $v_i$: This {\em binary} variable acts as several roles. It first represents whether or not $t_i = l$ at the beginning of each OUTER-LOOP iteration $l$ (STEP I of the OUTER-LOOP). Then, at the end of STEP II of the OUTER-LOOP, $v_i$ becomes the output of Algorithm Min-Max (INNER-LOOP). Thus, at the beginning of STEP III of the OUTER-LOOP, nodes will have the same $v_i$'s, which allows nodes to reach an agreement on whether to proceed to next OUTER-LOOP iteration.

%$v_i$ is set to $0$ if and only if $t_i = l$. Then, nodes use Algorithm Min-Max (INNER-LOOP) to agree on the observations ($v_i$'s). For brevity, $v_i$ also represents the output of Algorithm Min-Max of each OUTER-LOOP iteration $l$. Thus, after Algorithm Min-Max is completed, nodes will have the same $v_i$'s. If $v_i = 0$, then nodes will exit the OUTER-LOOP and output value $l$ (STEP III of the OUTER-LOOP); otherwise, nodes will either enter the next OUTER-LOOP iteration or terminate with output value $K$.

%$v_i$ is used as the input of Algorithm Min-Max (INNER-LOOP). For brevity, $v_i$ also represents the output of Algorithm Min-Max of each OUTER-LOOP iteration $l$. Thus, in the end of the OUTER-LOOP iteration $l$, i.e., after Algorithm Min-Max is completed, nodes will have the same value $v$. If $v_i = 0$, then nodes will exit the OUTER-LOOP and output value $l$.
\end{itemize}

Now, we describe the structure of Algorithm MVC. In each OUTER-LOOP iteration $l \in [0, K]$, nodes try to learn whether some node $i$ has the tentative state $t_i = l$ at the beginning of the iteration. First, $v_i$ acts as a local observation at node $i$, i.e., $v_i$ is set to $0$ if and only if $t_i = l$ (STEP I of the OUTER-LOOP). Then, at STEP II of the OUTER-LOOP, nodes use Algorithm Min-Max (INNER-LOOP) to reach agreement on the observations ($v_i$'s). There are two possible outcomes at the end of the STEP II of the OUTER-LOOP:

\begin{itemize}
\item $v_i = 0$:

This case implies that nodes learn that some node $i$ has $t_i = l$ at the beginning of the OUTER-LOOP iteration, and they know that all the other nodes that have not crashed also learn the same information. Thus, nodes will exit the OUTER-LOOP with outputs $l$ (STEP III of the OUTER-LOOP).

\item $v_i = 1$:

In this case, nodes will proceed to the next OUTER-LOOP iteration.\footnote{Note that this case does not mean that {\em no} node has $t_i = l$. However, in this case, nodes cannot be sure that all nodes that have not crashed also have learned that some node $i$ has $t_i = l$ at the beginning of the OUTER-LOOP iteration. Thus, nodes have to proceed to the next OUTER-LOOP iteration.} Moreover, nodes are guaranteed to set their tentative state ($t_i$'s) to some value strictly greater than $l$ when completing the INNER-LOOP. At step 4 of each INNER-LOOP phase, nodes update $t_i$'s to the minimum value that is strictly greater than $l$ and is received in that INNER-LOOP phase. Later, we will show that if at any point of time, node $i$ changes $v_i$ from $0$ to $1$, then $t_i$ will also be updated to some value strictly greater than $l$. Thus, if nodes enter the OUTER-LOOP iteration $l+1$, then no node will ever have tentative state $\leq l$. If at the end of OUTER-LOOP $K$, nodes do not exit the loop, i.e., the code {\em Exit OUTER-LOOP} is never executed, then all the fault-free nodes will terminate with output $K$.
%Note that this case does not mean that {\em no} node has $t_i = l$, since due to the property of the directed graphs, not all nodes' observations can be propagated to all the nodes. In this case, nodes will proceed to the next OUTER-LOOP iteration, and set their tentative state ($t_i$'s) to some value strictly greater than $l$ (line 4 of the INNER-LOOP). Later, we will show that if at any point of time, node $i$ changes $v_i$ from $0$ to $1$, then $t_i$ will also be updated to some value strictly greater than $l$. Thus, if nodes enter the OUTER-LOOP iteration $l+1$, then no node will ever have tentative state $\leq l$. If at the end of OUTER-LOOP $K$, nodes do not exit the loop, i.e., the third line of the OUTER-LOOP iteration is never executed, then all the fault-free nodes will terminate with output $K$.
\end{itemize}
Note that due to the agreement property of Algorithm Min-Max, either nodes will exit OUTER-LOOP at the same iteration, or nodes will terminate with output $K$.

%If nodes agree that some node $i$ has $t_i = l$ (the third line of the OUTER-LOOP), then nodes will exit OUTER-LOOP with outputs $l$. In this case, it is guaranteed that some node starts with the input $l$. Nodes will proceed to next OUTER-LOOP iteration, and set their tentative state ($t$'s) to some value strictly greater than $l$. Thus, after OUTER-LOOP iteration $l$, no node will ever have tentative state less than or equal to $l$. If in the end of OUTER-LOOP $K$, nodes do not exit the loop, then all the fault-free nodes will terminate with output $K$. 

~

\hrule

\vspace*{2pt}

\noindent {\bf Algorithm MVC}

\vspace*{4pt}

\hrule

%\vspace*{8pt}
\vspace*{4pt}

\begin{itemize}

  \item $t_i[0] :=$ input at node $i$

  \item \textbf{OUTER-LOOP $l := 0$ to $K$}:

  \begin{itemize}  
  \item \textbf{STEP I:}
  If $t_i[l] == l$, then $v_i[l] := 0$; otherwise, $v_i[l] := 1$
 
  \item \textbf{STEP II:} \textit{INNER-LOOP $p := 1$ to $2f+2$}:

    ~~Repeat the following steps $d$ times:            
    
            \begin{enumerate}
            \item Broadcast the tuple $(v_i[l], t_i[l])$
            \item Receive the broadcast tuples from incoming neighbors and node $i$ itself. Denote by $B_i$ the set of tuples received in this step.
            \item 
            If $p~mod~2 = 0$, then \hfill $\backslash\backslash$ \textbf{Min-Phase} %~~~~~~$\backslash\backslash$ \textbf{Min-Phase}
            
            ~~~~$v_i[l] := \min \{a~|~(a, *) \in B_i\}$
            
            Else, \hfill $\backslash\backslash$ \textbf{Max-Phase} %~~~~~~$\backslash\backslash$ \textbf{Max-Phase}

            ~~~~$v_i[l] := \max \{a~|~(a, *) \in B_i\}$
           
            \item If $~|\min \{b~|~(*, b) \in B_i,~b > l\}| > 0$, then%If $t_i[l] == l~$ \textbf{and} $~|\min \{b~|~(*, b) \in B_i,~b > l\}| > 0$, then
                           
            ~~~~$t_i[l] := \min \{b~|~(*, b) \in B_i,~b > l\}$
            
            %Else,
            
            %~~~~$t_i[l] := l$
            \end{enumerate}

   \item \textbf{STEP III:} If $v_i[l] == 0$, then 
   
   ~~~~ Exit OUTER-LOOP

   \end{itemize}

   \item Output $l$
   
   \end{itemize}

\hrule

~

\begin{theorem}
\label{thm:mvc}
Algorithm MVC is correct in all the graphs that satisfy $f$ CT node connectivity.
\end{theorem}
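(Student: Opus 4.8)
The plan is to verify the three correctness properties, reducing agreement to the already-established correctness of Algorithm Min-Max. Termination is immediate: the OUTER-LOOP runs at most $K+1$ times and each INNER-LOOP consists of $(2f+2)d$ rounds with $d \leq n$, so every fault-free node halts in a bounded number of rounds. The heart of the argument is a pair of standing invariants maintained across OUTER-LOOP iterations. First I would show (Claim A) that $t_i$ always equals the input of some node: this holds at initialization, and the only place $t_i$ is reassigned is step 4, where it is set to a $b$-component received from a neighbor, which by induction is itself some node's input. Second I would show (Claim B) that whenever a fault-free node enters OUTER-LOOP iteration $l$, it satisfies $t_i \geq l$.

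The key coupling lemma, and the main obstacle, is to maintain throughout iteration $l$ the invariant $v_i = 1 \Rightarrow t_i > l$. It holds at STEP I, where $v_i = 0$ exactly when $t_i = l$ (combined with $t_i \geq l$ from Claim B). I would then argue by induction over the rounds of the INNER-LOOP: each node broadcasts $(v_i, t_i)$, and by the inductive hypothesis every broadcast tuple with $a = 1$ carries $b > l$. In a Min phase, $v_i$ becomes $1$ only if every received $a$ equals $1$, so every received $b$ exceeds $l$; in a Max phase, $v_i$ becomes $1$ only if some received $a$ equals $1$, and that tuple again has $b > l$. In either case the set $\{\, b \mid (*,b) \in B_i,\ b > l \,\}$ is non-empty, so step 4 updates $t_i$ to a value strictly greater than $l$. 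This both preserves the invariant and drives the monotone progress that yields Claim B's induction: if all fault-free nodes end iteration $l$ with $v_i = 1$ (so they do not exit), then all have $t_i > l$, hence they enter iteration $l+1$ with $t_i \geq l+1$.

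For agreement I would observe that the $v_i$-updates in step 3 are identical to those of Algorithm Min-Max and depend only on the $a$-components of received tuples, so the $t_i$-bookkeeping of step 4 does not interfere with them; the correctness of Algorithm Min-Max therefore applies verbatim to the binary variables $v_i[l]$, guaranteeing that all fault-free nodes hold the same $v_i[l]$ at the end of STEP II. Consequently every fault-free node makes the same decision in STEP III, so they all exit at the same iteration (or all fall through), and all output the same value.

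Validity then follows from two observations. During an INNER-LOOP, a value $0$ in the $v$-variables can never be created from an all-$1$ configuration---it can only propagate---since if every live node holds $v = 1$ in some round, both the minimum and the maximum of the received $a$'s equal $1$. Hence if nodes exit at iteration $l$ (all $v_i[l] = 0$), some node must have had $v_i = 0$ at STEP I, i.e., $t_i = l$; by Claim A this value $l$ is some node's input, so the output $l$ is valid. If the loop instead reaches $l = K$, Claim B forces $t_i \geq K$ while inputs lie in $[0,K]$, so $t_i = K$ for all fault-free nodes; by Claim A, $K$ is some node's input, the nodes exit at iteration $K$, and the output $K$ is valid. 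I expect the coupling lemma of the second paragraph---correctly tying the binary agreement on $v_i$ to the tentative states $t_i$ in the presence of crashes and the alternating Min/Max updates---to be the crux of the proof.
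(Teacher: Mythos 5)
Your proposal is correct and follows essentially the same route as the paper's own proof: the same decomposition (termination is trivial; agreement is reduced to the correctness of Algorithm Min-Max after observing that step 4 does not interfere with the $v$-updates; validity rests on the facts that $t_i$ always equals some node's input and that $v_i = 1$ implies $t_i > l$), with your coupling lemma being precisely the paper's Lemma~\ref{lemma:ti>l} and your Claims A and B matching Claim~\ref{claim:ti} and the iteration-entry invariant. The only difference is in technique for the key lemma---you use a direct forward induction over INNER-LOOP rounds, while the paper argues by contradiction, tracing a chain of nodes back from the first Max-Phase in which a tuple $(1,l)$ is propagated---and your version is, if anything, cleaner, since it treats uniformly the induction step that the paper dismisses as ``similar logic.''
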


The proof is presented in Appendix \ref{a:mvc}.

\section{Iterative Algorithms}
\label{s:iterative}

Observe that Algorithm Min-Max does not utilize any topology information, since it does not require node identifiers (ID), and the usage of the fault-tolerant diameter $d$ can be replaced by the number of nodes $n$. That is, assuming the knowledge of $n$ and $f$, Algorithm Min-Max works in {\em anonymous} systems \cite{anonymous} and {\em anonymous} networks \cite{anonymous_networks}, where nodes do not have IDs. In anonymous systems, we define a family of iterative algorithms -- {\em Fixed Iterative Algorithm} -- those iterative algorithms using {\em fixed} transition functions. This section assumes synchronous systems, as well.

\paragraph{Iterative Algorithms}

We first describe the structure of the iterative algorithms of interest. Each node $i$ maintains state $v_i$, with $v_i[t]$ denoting the state of node $i$ at the {\em end}\, of the $t$-th iteration of the algorithm. Initial state of node $i$,
$v_i[0]$, is equal to the initial {\em input}\, provided to node $i$.
At the {\em start} of the $t$-th iteration ($t>0$), the state of
node $i$ is $v_i[t-1]$.
The iterative algorithms of interest will require each node $i$
to perform the following three steps in iteration $t$, where $t>0$.

\begin{enumerate}
\item {\em Transmit step:} Transmit current state, namely $v_i[t-1]$, on all outgoing edges.

\item {\em Receive step:} Receive values on all incoming edges. 
Denote by $r_i[t]$ the union of $i$'s value and the values received by node $i$ from its
neighbors.

\item {\em Update step:} Node $i$ updates its state using a transition function $Z_i$ as
follows. $Z_i$ is a part of the specification of the algorithm, and takes
as input the vector $r_i[t]$.
\begin{eqnarray}
v_i[t] & = &  Z_i ~(r_i[t], t)
\label{eq:Z_i}
\end{eqnarray}
\end{enumerate}

\paragraph{Fixed Iterative Algorithms}

\begin{definition} {\bf (Fixed Transition Function)}
A transition function $Z_i$ for node $i$ is said to be {\em fixed} if for all iteration $t \geq 0$ and all $i \in \sv$, $Z_i(R_i[t], t) = Z^*(R_i[t])$. In other words, the transition function does not change over time, and every node uses the same transition function.
\end{definition}

%The iterative algorithms that require all nodes to use the {\em same} fixed transition function in the Update step is called {\em fixed iterative} algorithms. For this family of algorithms, we present the following result.%For example, the transition function in the second algorithm above is not fixed, but the transition function in our approximate paper is fixed.

For iterative algorithms that use fixed transition function, we present the following result.

\begin{theorem}
\label{thm:fixed}
In general, it is impossible to solve consensus using {\em fixed iterative algorithms} in anonymous systems and networks.
\end{theorem}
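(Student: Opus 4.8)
The plan is to prove impossibility by exhibiting two tiny directed networks that look identical to any anonymous node running a \emph{fixed} transition function, yet force contradictory decisions. Because the network is anonymous and the transition function $Z^*$ is identical at every node and in every iteration, a single fixed iterative algorithm must solve consensus on \emph{every} graph satisfying $f$ CT node connectivity using one and the same $Z^*$. I would therefore pin down the value that $Z^*$ must take on a particular received set in two different orientations of the same edge and derive a contradiction. Note that both graphs I use satisfy the condition of Theorem~\ref{thm:sc}, so consensus is genuinely achievable in each of them by \emph{some} algorithm; the point is that no fixed, ID-free one works for both.

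First I would record the two elementary constraints that validity imposes on $Z^*$. Running the algorithm on any graph in which every node has the same input $x$, each node's received set is always the singleton $\{x\}$, so validity forces $Z^*(\{x\}) = x$; in particular $Z^*(\{0\}) = 0$ and $Z^*(\{1\}) = 1$. Next, taking $f = 0$, I would consider the two-node graph $G_1$ with a single directed edge from node $A$ to node $B$ (plus the self-loops guaranteed by $N_i^-, N_i^+$), with input $0$ at $A$ and $1$ at $B$. Here $A$ only ever receives $\{0\}$, so $A$ remains at $0$ forever, while after the first iteration $B$'s state becomes $Z^*(\{0,1\})$; tracing one further iteration shows $B$ has reached a fixed point, so $B$'s output equals $Z^*(\{0,1\})$ regardless of how many iterations the algorithm runs. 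Agreement with $A$ then forces $Z^*(\{0,1\}) = 0$. Reversing the edge yields $G_2$ with the directed edge from $B$ to $A$ and the same inputs; now $B$ is the source, stays at $1$, and the identical computation forces $Z^*(\{0,1\}) = 1$. Since $Z^*(\{0,1\})$ cannot equal both $0$ and $1$, no fixed iterative algorithm can be correct on both $G_1$ and $G_2$, which establishes the theorem.

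The main thing to get right is the bookkeeping of the state evolution: I must verify carefully that the node observing the mixed set $\{0,1\}$ truly stabilizes at $Z^*(\{0,1\})$ (so that the final output is independent of the termination rule) and that the source node is genuinely stuck at its own input because it never receives the other value. The conceptual crux, and the place where anonymity does the real work, is that the designer must commit to a single $Z^*$ without knowing the orientation of the network: the very same downstream view $\{0,1\}$ arises in both orientations, while validity and agreement pin its required output to opposite values. I would emphasize that the obstruction is exactly the combination of directedness with a set-valued, identifier-free, time-invariant update, which is precisely what distinguishes a fixed iterative algorithm from the phase-indexed Algorithm Min-Max used earlier.
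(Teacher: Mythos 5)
Your proof is correct, and it reaches the theorem by a genuinely different and more elementary route than the paper. The paper fixes an arbitrary $f$ and builds one counterexample graph on $f+3$ nodes --- an $(f+1)$-clique, a source $s$ with edges into every clique node, and a leaf $l$ fed by every clique node --- and then rules out every fixed transition function by a three-way case analysis on its values at the multisets ($f$ zeros and one $1$) and ($f$ ones and one $0$), using executions in which one clique node crashes at the start precisely so that the surviving nodes see multisets of that form. You instead take $f=0$ and the two-node graph: validity pins the upstream node to its own input, the downstream node provably stabilizes at $Z^*(\{0,1\})$ (your fixed-point check is right, and it correctly makes the conclusion independent of the termination rule, a point the paper leaves implicit), and the two edge orientations force $Z^*(\{0,1\})$ to equal both $0$ and $1$. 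What your approach buys: no crashes are needed, only a single value of $Z^*$ is interrogated, and the argument exposes the obstruction --- anonymity plus an identifier-free, time-invariant update in a directed graph --- as having nothing to do with fault tolerance. What the paper's approach buys: it covers every $f$ explicitly, and its contradiction is internal to executions on one labeled graph, whereas you quantify over two graphs. That last difference is cosmetic: since your $G_1$ and $G_2$ are isomorphic and nodes are anonymous, you could keep the single graph $G_1$ and simply swap the two inputs to get the same contradiction on one topology; moreover your no-crash executions remain admissible for larger $f$, so the restriction to $f=0$ is not essential, only a difference in stated scope. One caveat applies equally to both proofs: the claim that $Z^*$ on an all-$x$ collection must return $x$ is asserted rather than proved (the paper writes ``It is clear that\dots''); strictly speaking, a transition function that flips singleton inputs, paired with an even iteration count, evades the pure validity argument, so this shared informality makes your proof no less rigorous than the paper's own.
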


The proof is presented in Appendix \ref{a:fix}.

\section{Asynchronous Systems}
\label{s:async}

This section considers asynchronous systems, where each node proceeds in different speed and the messages may be arbitrarily delayed. For simplicity, we assume the channels are reliable.%For simplicity, we assume that the messages would be eventually delivered.

\paragraph{Approximate Consensus}

\cite{impossible_proof_lynch} showed that it is impossible to achieve exact consensus in asynchronous systems with a single crash fault. Therefore, we are interested in approximate consensus algorithms. The algorithms must achieve the following three properties:

\begin{itemize}
\item {\bf $\epsilon$-agreement}: the difference between outputs at any two fault-free nodes is bounded by $\epsilon$.

\item {\bf Validity}: the output at any fault-free node is in the {\em convex hull} of all the inputs.

\item {\bf Termination}: every fault-free node decides on an output in a finite-amount of time.
\end{itemize} 

\subsection{Necessity}

To facilitate the discussion, we first introduce an useful definition.

\begin{definition}
Given a graph $G(\sv, \se)$ and a node-partition $A, B$ of $\sv$, $A$ is said to {\bf propagate} to $B$ if (i) $B$ is not empty; and (ii) there exist at least $f+1$ distinct nodes in $A$ which have outgoing links to some node in $B$, i.e., $|\{i~|~i \in A,~~N_i^+ \cap B \neq \emptyset\}| \geq f+1$.
\end{definition}

We will denote the fact that set $A$ propagates to set $B$ by the notation of $A \Zightarrow B$. When it is not true that $A \Zightarrow B$, we will denote that fact by $A \not\Zightarrow B$.

\begin{theorem}
\label{thm:nc_async}
Suppose that an asynchronous approximate consensus algorithm exists for $G(\sv, \se)$. Then for any node partition $L, C, R$ of $\sv$, where $L$ and $R$ are both non-empty, either $L \cup C \Zightarrow R$ or $C \cup R \Zightarrow L$.
\end{theorem}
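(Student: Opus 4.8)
The plan is to argue by contradiction. Suppose a correct asynchronous approximate consensus algorithm exists, yet there is a partition $L, C, R$ with $L, R$ non-empty for which both $L \cup C \not\Zightarrow R$ and $C \cup R \not\Zightarrow L$. I would set $F_1 = \{i \in L\cup C \mid N_i^+ \cap R \neq \emptyset\}$ and $F_2 = \{i \in C \cup R \mid N_i^+ \cap L \neq \emptyset\}$. Negating both propagations means precisely $|F_1| \le f$ and $|F_2| \le f$. The crucial structural observation is that, since $\sv = L \cup C \cup R$, $F_1$ is exactly the set of nodes \emph{outside} $R$ with an edge into $R$, and $F_2$ is exactly the set of nodes outside $L$ with an edge into $L$. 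Hence silencing $F_1$ insulates $R$ from every value originating outside $R$, and silencing $F_2$ does the same for $L$; note also that $F_1 \cap R = \emptyset$ and $F_2 \cap L = \emptyset$.

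First I would establish two anchoring facts using validity together with the standard ``crashed versus slow'' indistinguishability of asynchronous systems. Fix two input values $a < b$ with $b - a > \epsilon$. Consider the execution in which every node has input $b$ and all $|F_1|\le f$ nodes of $F_1$ crash before sending anything; the nodes of $R$ are fault-free, so by validity and termination each decides $b$ at some finite time $T_R$, and its local view up to $T_R$ is determined solely by messages internal to $R$ (its only external in-neighbours lie in $F_1$). Symmetrically, in the all-$a$ execution in which $F_2$ crashes, each (fault-free) node of $L$ decides $a$ at some finite time $T_L$, its view depending only on messages internal to $L$. I would record the internal message schedules and the decision times $T_R, T_L$ from these two reference executions.

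Next I would splice these into a single fault-free execution $E$ in which \emph{no} node crashes but messages are merely delayed. Assign input $a$ to $L$, input $b$ to $R$, and $a$ to $C$. Deliver the $R$-internal messages exactly as in the first reference and the $L$-internal messages exactly as in the second; delay every message along an edge of $F_1$ into $R$ until after $T_R$, every message along an edge of $F_2$ into $L$ until after $T_L$, and deliver all remaining messages promptly. These delay sets concern disjoint edge sets (edges into $R$ versus edges into $L$), so the schedule is consistent, every message is eventually delivered, and $E$ is a legal asynchronous execution with zero faults. By construction the view of each node of $R$ up to $T_R$ coincides with its view in the first reference, so $R$ decides $b$; likewise $L$ decides $a$. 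Since $L$ and $R$ are non-empty and fault-free, two fault-free nodes output values differing by $b - a > \epsilon$, contradicting $\epsilon$-agreement.

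The main obstacle, and the step I expect to require the most care, is justifying the legality of $E$ — specifically that $L$ and $R$ really do decide (and decide $a$ and $b$) without ever hearing from $F_2$ and $F_1$ respectively. This is exactly where $|F_1|,|F_2|\le f$ enters: because the algorithm tolerates up to $f$ crashes, it must terminate with the validity-forced outputs in the two reference executions where those sets genuinely crash, and an $R$-node (resp.\ $L$-node) cannot distinguish ``its external in-neighbours crashed'' from ``their messages are delayed past $T_R$ (resp.\ $T_L$).'' Irrevocability of decisions then prevents the late-delivered messages from altering the already-committed outputs. I would be careful to verify that no cross-block message into $L$ or $R$ has been overlooked — in particular that every edge $R \to L$ falls under $F_2$ and every edge $L \to R$ under $F_1$ — so that the two local views are genuinely insulated and the argument closes.
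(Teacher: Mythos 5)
Your proof is correct and takes essentially the same route as the paper: you define the same boundary sets (your $F_1$ and $F_2$ are the paper's $O(R)$ and $O(L)$), and you use the same crashed-versus-slow indistinguishability together with validity to force the nodes of $R$ to output $b$ and the nodes of $L$ to output $a$, contradicting $\epsilon$-agreement. Your explicit reference executions and splicing construction are simply a more careful formalization of the step the paper argues informally (``nodes in $L$ cannot wait for any message from nodes in $O(L)$ to decide the outputs'').
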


\begin{proof}
The proof is by contradiction. Suppose that there exists a correct approximate consensus algorithm, and $G(\sv, \se)$ does not satisfy the condition. That is, there exists a node partition $L, C, R$ such that $L$ and $R$ are not empty, and $L \cup C \not\Zightarrow R$ and $C \cup R \not\Zightarrow L$. Let $O(L)$ denote the set of nodes in $C \cup R$ that have outgoing links to some nodes in $L$, i.e., $\{i~|~i \in C \cup R,~~N_i^+ \cap L \neq \emptyset\}$. Similarly, define $O(R) = \{j~|~j \in L \cup C,~~N_j^+ \cap R \neq \emptyset\}$. By assumption, $|O(L)| \leq f$ and $|O(R)| \leq f$.

Consider the scenario where (i) each node in $L$ has input $0$; (ii) each node in $R$ has input $2\epsilon$; (iii) nodes in $C$ (if non-empty) have arbitrary inputs in $[0, 2\epsilon]$; (iv) no node crashes; and (v) the message delay from $O(L)$ to $L$ and from $O(R)$ to $R$ is arbitrarily large compared to all the other traffic. Consider nodes in $L$. From their perspectives, it is possible that all nodes in $O(L)$ have crashed. This is due to the following observations:

\begin{itemize}
\item The only nodes in $C \cup R$ that have outgoing links to $L$ are nodes in $O(L)$. Thus, nodes in $L$ are not able to learn whether nodes in $O(L)$ are alive or not from nodes in $(C \cup R) - O(L)$.

\item The message delay from $O(L)$ is arbitrarily large.

\item The size of $|O(L)| \leq f$.
\end{itemize} 

Therefore, nodes in $L$ cannot wait for any message from nodes in $O(L)$ to decide the outputs. Similarly, nodes in $R$ cannot wait for any message from nodes in $O(R)$ to decide the outputs. Consequently, to satisfy the validity property, the output at each node in $L$ has to be $0$, since $0$ is the input of all the nodes in $L$. Similarly, all nodes in $R$ have to output $2 \epsilon$. Thus, $\epsilon$-agreement property is violated, since $\epsilon < 2 \epsilon$. This is a contradiction.
\end{proof}

\subsection{Sufficiency}

We prove that the condition in Theorem \ref{thm:nc_async} is also sufficient by proposing an asynchronous approximate consensus algorithm -- Algorithm WA (Wait-and-Average). The algorithm assumes the knowledge of global topology at each node, and the algorithm proceeds in phases. In each phase, nodes flood messages containing their current value, ID (identifier), and a phase index. Each node $i$ waits until it has received {\em enough} values from other nodes. Then, node $i$ updates its value to be the \textit{average} of all the values received in this phase, and then proceeds to the next phase. When node $i$ has finished $p_{end}$ phases, it outputs its current state. $p_{end}$ is some sufficiently large integer.%$p_{end}$ is defined in equation (\ref{eq:end}) below.

Now, we discuss how many values received by a node is considered {\em enough}. Let $heard_i[p]$ be the set of nodes from which node $i$ has received values {\em during} phase $p$. Each node $i$ proceeds to perform the averaging operation if the following condition holds.

~

\noindent {\bf Condition {\em WAIT}}: ~ Denote by $reach_i(F)$ the set of nodes that have paths to node $i$ in the reduced graph $G_F$. Then, Condition {\em WAIT} is satisfied if there exists a set of nodes $F_i \subseteq \sv-\{i\}$ and $|F_i| \leq f$ such that $reach_i(F_i) \subseteq heard_i[p]$.\footnote{$reach_i(F_i)$ may be different in each phase, since it depends on the delay pattern. For simplicity, we ignore the phase index $p$ in the notation.}

~

Now, we present the algorithm below.

~

\hrule

\vspace*{2pt}

\noindent {\bf Algorithm WA}

\vspace*{4pt}

\hrule

%\vspace*{8pt}
\vspace*{4pt}

$p_{end}$ is some sufficiently large integer.

  \begin{itemize}
  \item For each node $i$, set $v_i[0]$ to the input at node $i$
  
  \item For Phase $p = 1$ to $p_{end}$:

        \begin{itemize}
        \item On entering phase $p\geq 1$:

            ~~~~$R_i[p] = \{v_i[p-1]\}$\vspace{1mm}
            
            ~~~~$heard_i[p] = \{i\}$\vspace{1mm}            
        
            ~~~~Send message $( v_i[p-1], i, p)$ to all the outgoing neighbors
        
            ~
        
        \item When message $(h, j, p)$ is received for the first time:
        
            ~~~~$R_i[p] = R_i[p] \cup \{h\}$\vspace{1mm}
            
            ~~~~$heard_i[p] = heard_i[p] \cup \{j\}$\vspace{1mm}            
        
            ~~~~Send message $(h, j, p)$ to all the outgoing neighbors\vspace{1mm}
            
            ~~~~if Condition {\em WAIT} holds:\vspace{1mm}
        
            ~~~~~~~~$v_i[p] = \frac{\sum_{v \in R_i[p]} v}{|R_i[p]|}$\vspace{1mm}
            
        \end{itemize}

   \item Output $v_i$
   \end{itemize}

\hrule

~

The following theorem shows the correctness of Algorithm WA. It also proves that the condition in Theorem \ref{thm:nc_async} is sufficient for approximate consensus in asynchronous systems.

\begin{theorem}
Algorithm WA is correct in all graphs that satisfy the condition in Theorem \ref{thm:nc_async}.
\end{theorem}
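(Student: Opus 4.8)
The plan is to verify the three required properties—validity, termination, and $\epsilon$-agreement—by treating each phase of Algorithm WA as one synchronous averaging step restricted to the fault-free nodes. Fix the actual crashed set $F$ with $|F|\le f$ and write $\sv-F$ for the fault-free nodes. Each update $v_i[p]=\frac{1}{|R_i[p]|}\sum_{v\in R_i[p]}v$ is a convex combination of phase-$(p-1)$ values, so I would record it as $v_i[p]=\sum_j M_{ij}[p]\,v_j[p-1]$, where $M[p]$ is row-stochastic, $M_{ii}[p]>0$ (node $i$ always seeds $R_i[p]$ with its own value), and $M_{ij}[p]\ge 1/n$ whenever $j\in heard_i[p]$. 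Validity is then immediate by induction on $p$: each $v_i[p]$ is a convex combination of values already in the convex hull of the inputs, and the output inherits this. For termination I would observe that the actual fault set $F$ is always a legitimate witness for Condition \emph{WAIT}: every node of $reach_i(F)$ is fault-free and reaches $i$ along a path in $G_F$, so by link reliability and flooding node $i$ eventually receives all of $reach_i(F)$ in phase $p$; hence no node blocks, and every fault-free node completes all $p_{end}$ phases and outputs.

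The substance is $\epsilon$-agreement, which I would reduce to a contraction statement. Let $U[p]$ and $\mu[p]$ denote the maximum and minimum values among fault-free nodes at the end of phase $p$. The claim is that there exist constants $K\ge 1$ and $\beta\in(0,1)$, depending only on $G$ and $f$, such that for every $p$ the product $M[p+K-1]\cdots M[p]$ (restricted to $\sv-F$) has a common column $g$ all of whose entries are at least $\beta$. A standard coefficient-of-ergodicity argument then yields $U[p+K]-\mu[p+K]\le(1-\beta)\,(U[p]-\mu[p])$, and iterating the contraction $\lceil\log_{1/(1-\beta)}(\Delta/\epsilon)\rceil$ times, where $\Delta$ is the spread of the inputs, drives the fault-free range below $\epsilon$; this is precisely what forces $p_{end}$ to be ``sufficiently large.''

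To produce the common positive column I would prove a spreading lemma. First, the condition of Theorem \ref{thm:nc_async} implies that the reduced graph $G_F$ has a source $s$: if not, the ancestor sets of two sourceless nodes give disjoint, mutually unreachable sets, and plugging them into the partition of Theorem \ref{thm:nc_async} (with $C$ absorbing the remaining nodes and $F$) makes both $L\cup C\Zightarrow R$ and $C\cup R\Zightarrow L$ fail, a contradiction—this mirrors the partition argument in the proof of Theorem \ref{thm:nc}. Fixing such a source $s$, let $S[p]\subseteq\sv-F$ be the set of fault-free nodes whose phase-$p$ value depends with uniformly positive weight on $v_s$ at the start of the window; $S$ contains $s$ and is monotone. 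The lemma to prove is that whenever $S\neq\sv-F$, within one phase some node of $L:=(\sv-F)-S$ is forced to hear an already-influenced node and hence joins $S$, so $S$ saturates within $K\le n$ phases and the desired column appears. This is exactly where the $f+1$ redundancy of the \emph{propagate} relation enters: applying the condition to the cut $(L,\,F,\,S)$ yields a boundary that carries at least $f+1$ senders on one side, so that—even after discarding the at most $f$ crashed senders—a fault-free sender remains to push a value across the boundary that no size-$\le f$ \emph{WAIT} guess can fully exclude.

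The main obstacle I anticipate is precisely this crossing step, because the \emph{propagate} relation is a set-level property ($\ge f+1$ senders into the set $L$, not into a prescribed node), whereas Condition \emph{WAIT} lets each node $\ell$ complete as soon as $reach_\ell(F_\ell)\subseteq heard_\ell[p]$ for \emph{some} guessed $F_\ell$ of size $\le f$. A single $S$-to-$\ell$ edge is therefore useless, since $\ell$ may guess exactly that sender crashed; one must instead argue at the level of reach sets, showing that no size-$\le f$ node cut can separate the influenced set from at least one node of $L$, and must also secure the propagation in the direction that \emph{enlarges} $S$ rather than the harmless reverse direction. I expect this to require echoing the necessity proof: if every node of $L$ could select a witness fault set avoiding all of $S$, one assembles a partition on which both disjuncts of Theorem \ref{thm:nc_async} fail, a contradiction. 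Matching the $f+1$ redundancy of \emph{propagate} against the $f$-sized guesses permitted by \emph{WAIT} is the delicate quantitative heart of the argument; once it is in place, the spreading lemma, the positive common column, and the geometric contraction follow routinely.
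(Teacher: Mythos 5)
Your validity and termination arguments are fine, and the reduction of $\epsilon$-agreement to a contraction statement is a legitimate framework. The genuine gap is the spreading lemma, which is false as you have stated it. You fix a source $s$ of $G_F$ and claim that its influence set $S$ must grow until it saturates $\sv-F$. Counterexample: let $G$ be the complete graph on $n\geq 2f+2$ nodes (which satisfies the condition of Theorem \ref{thm:nc_async}), let no node crash, and let the scheduler delay every message originating at $s$ indefinitely. Every other node $\ell$ then satisfies Condition \emph{WAIT} in every phase with the witness $F_\ell=\{s\}$, since $reach_\ell(\{s\})=\sv-\{s\}\subseteq heard_\ell[p]$, so every node completes every phase without ever hearing $s$, and $S=\{s\}$ forever. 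This also refutes the rescue you propose for the crossing step: in this scenario \emph{every} node of $L$ selects a witness fault set whose reach set avoids $S$, yet no violating partition exists, because your contradiction would need \emph{both} sides of the partition to be predecessor-closed sets with in-boundary at most $f$. The set $reach_\ell(F_\ell)$ has that structure, but the influence set $S$ does not: it is not a reach set, it is not predecessor-closed, and nothing bounds the number of nodes with edges into it. So only one of the two required non-propagation statements ($C\cup R\not\Zightarrow L$ or $L\cup C\not\Zightarrow R$) can be produced, and the condition of Theorem \ref{thm:nc_async} is not contradicted. The deeper point is that under Condition \emph{WAIT} the adversary genuinely can suppress any \emph{prescribed} node's influence; agreement cannot be proved by tracking a designated root.

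The paper's proof uses a different, symmetric invariant that sidesteps this entirely: for any two fault-free nodes $i,j$ that compute in phase $p$, $heard_i[p]\cap heard_j[p]\neq\emptyset$. This is provable precisely because it involves \emph{two} reach sets: if $reach_i(F_i)\cap reach_j(F_j)=\emptyset$, then both sets are predecessor-closed with in-boundaries contained in $F_i$ and $F_j$ respectively (each of size at most $f$), so the partition $L=reach_i(F_i)$, $R=reach_j(F_j)$, $C=\sv-L-R$ violates both disjuncts of the condition — exactly the two-sided structure your single-reach-set argument lacks. Pairwise intersection of heard sets then yields the standard per-phase contraction of the fault-free spread by a factor $(1-1/n)$ (every pair of new values shares a common contributing value of weight at least $1/n$), with no need for a common positive column accumulated over a window, no spreading lemma, and no identification of a source. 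If you want to salvage your matrix approach, the statement you would need is existential rather than universal — \emph{some} node's influence (depending on the schedule, not fixed in advance) reaches everyone within a bounded window — but proving that is essentially equivalent to, and less direct than, the pairwise-intersection lemma.
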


\begin{proofSketch}
Validity and termination properties are obvious. For $\epsilon$-agreement, we only present the key lemma here. The rest of the proof is standard, e.g., \cite{Tseng_podc14,AA_Dolev_1986,welch_book}.

For phase $p \geq 1$, consider two nodes $i, j$ that have successfully computed values $v_i[p]$ and $v_j[p]$, respectively, in phase $p$. That is, $i$ and $j$ have not crashed before computing $v[p]$'s. With a slight abuse of terminology, define $heard_i[p]$ as the set of nodes whose values are used by node $i$ to compute its state $v_i[p]$ in phase $p$. Define $heard_j[p]$ similarly.

\begin{lemma}
\label{lemma:alg_WA}
$heard_i[p] \cap heard_j[p] \neq \emptyset$.
\end{lemma}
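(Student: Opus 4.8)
The plan is to argue by contradiction using the propagation condition of Theorem~\ref{thm:nc_async}. First note that if $i = j$ the claim is trivial, since $i \in heard_i[p]$ always holds, so I would assume $i \neq j$. Because node $i$ successfully computed $v_i[p]$, Condition \emph{WAIT} must have held for $i$ in phase $p$; hence there is a set $F_i \subseteq \sv - \{i\}$ with $|F_i| \le f$ and $reach_i(F_i) \subseteq heard_i[p]$. Symmetrically there is $F_j \subseteq \sv - \{j\}$ with $|F_j| \le f$ and $reach_j(F_j) \subseteq heard_j[p]$. Suppose, toward a contradiction, that $heard_i[p] \cap heard_j[p] = \emptyset$; then in particular $reach_i(F_i) \cap reach_j(F_j) = \emptyset$.

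The next step is to build a node-partition to feed into Theorem~\ref{thm:nc_async}. I would set $L = reach_i(F_i)$, $R = reach_j(F_j)$, and $C = \sv - L - R$. Since $i$ has a path to itself and $i \notin F_i$, we have $i \in L$, and likewise $j \in R$, so $L$ and $R$ are non-empty; by the contradiction hypothesis they are disjoint, so $L, C, R$ is a genuine partition of $\sv$ (with $C$ possibly empty). Theorem~\ref{thm:nc_async} then forces either $L \cup C \Zightarrow R$ or $C \cup R \Zightarrow L$, and I will rule out both alternatives.

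The key observation, and the step I expect to carry the real weight, is a ``closure under predecessors'' property of the reach sets: a node $w \notin R$ cannot have an outgoing edge into $R$ unless $w \in F_j$. Indeed, if $w \notin F_j$ had an edge $(w,r)$ with $r \in R$, then since $r \in reach_j(F_j)$ implies $r \notin F_j$, the edge $(w,r)$ survives in the reduced graph $G_{F_j}$; concatenating it with a path from $r$ to $j$ in $G_{F_j}$ yields a path from $w$ to $j$ in $G_{F_j}$, whence $w \in reach_j(F_j) = R$, a contradiction. Consequently, the set of nodes in $L \cup C = \sv - R$ having outgoing links into $R$ is contained in $F_j$, so it has size at most $|F_j| \le f$. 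This contradicts $L \cup C \Zightarrow R$, which by definition requires at least $f+1$ such nodes.

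An identical argument with $F_i$ and $L = reach_i(F_i)$ in place of $F_j$ and $R$ rules out $C \cup R \Zightarrow L$: any node in $C \cup R = \sv - L$ with an edge into $L$ must lie in $F_i$, so there are at most $f$ of them. Since Theorem~\ref{thm:nc_async} guarantees that at least one of the two propagations must hold, yet both are now impossible, the assumption $heard_i[p] \cap heard_j[p] = \emptyset$ cannot stand, which establishes the lemma. The only genuinely non-routine point is the closure property above; the rest is bookkeeping about which fault set to invoke on each side of the partition.
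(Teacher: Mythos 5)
Your proof is correct and follows essentially the same route as the paper's: both assume the reach sets are disjoint, form the partition $L = reach_i(F_i)$, $R = reach_j(F_j)$, $C = \sv - L - R$, show that only nodes of $F_i$ (resp.\ $F_j$) can have edges into $L$ (resp.\ $R$), and then contradict the propagation condition of Theorem~\ref{thm:nc_async}. Your ``closure under predecessors'' observation is exactly the paper's Claim~\ref{claim:alg_WA}, just spelled out in more detail than the paper's rather terse justification.
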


\begin{proof}
By construction, there exist two sets $F_i$ and $F_j$ such that (i) $F_i \subseteq \sv - \{i\}$ and $|F_i| \leq f$; (ii) $F_j \subseteq \sv - \{j\}$ and $|F_j| \leq f$; (iii) $reach_i(F_i) \subseteq heard_i[p]$; and (iv) $reach_j(F_j) \subseteq heard_j[p]$. If $reach_i(F_i) \cap reach_j(F_j) \neq \emptyset$, then the proof is complete, since $reach_i(F_i) \subseteq heard_i[p]$ and $reach_j(F_j) \subseteq heard_j[p]$. Thus, $heard_i[p] \cap heard_j[p] \neq \emptyset$. Now, consider the case when $reach_i(F_i) \cap reach_j(F_j) =\emptyset$. We will derive a contradiction in this case.

We start with the following claim:

\begin{claim}
\label{claim:alg_WA}
In $G$, the only nodes that may have outgoing links to nodes in $reach_i(F_i)$ are nodes in $F_i$. Similarly, in $G$, the only nodes that may have outgoing links to nodes in $reach_j(F_j)$ are nodes in $F_j$.
\end{claim}

\begin{proof}
Recall that $reach_i(F_i)$ is defined as the set of nodes that have paths to node $i$ in the reduced graph $G_{F_i}$, and $reach_j(F_j)$ is defined similarly. Thus, $F_i \cap reach_i(F_i) = \emptyset$ and $F_j \cap reach_j(F_j) = \emptyset$. These two observations together with the definitions of $reach_i(F_i)$ and $reach_j(F_j)$ imply that there is no path from nodes in $\sv - reach_i(F_i) - F_i$ (if non-empty) to nodes in $reach_i(F_i)$ in $G_{F_i}$. Hence, the claim is proved.
\end{proof}

Let $L = reach_i(F_i)$, $R = reach_j(F_j)$ and $C = \sv - L - R$. Observe that since $reach_i(F_i) \cap reach_j(F_j) =\emptyset$, $L, C, R$ form a partition of $\sv$. Moreover, $i \in reach_i(F_i)$ and $j \in reach_j(F_j)$; hence, $L = reach_i(F_i)$ and $R = reach_j(F_j)$ are both non-empty. Then, let $O(L)$ be the nodes in $C \cup R$ that have outgoing links to some nodes in $L$ in $G$. Since $L = reach_i(F_i)$, the only nodes that may be in $O(L)$ are in $F_i$ due to Claim \ref{claim:alg_WA}. By assumption, $|F_i| \leq f$. Therefore, $C \cup R \not\Zightarrow L$. Similarly, we can argue that $L \cup C \not\Zightarrow R$. These two conditions violate the necessary condition, a contradiction. Thus, $reach_i(F_i) \cap reach_j(F_j) \neq \emptyset$, which implies $heard_i[p] \cap heard_j[p] \neq \emptyset$. This completes the proof.
\end{proof}

Let $M$ and $m$ denote the upper bound and the lower bound on the inputs, respectively. Then, by an analysis similar to \cite{Tseng_podc14,AA_Dolev_1986,welch_book}, Lemma \ref{lemma:alg_WA} can be used to show $\epsilon$-agreement when $p_{end}$ is sufficiently large (as a function of $n, f, M, m$). 
%is chosen as follows, the difference between outputs at two fault-free nodes is bounded by $\epsilon$.

%\begin{equation}
%\label{eq:end}
%p_{end} :=
%\log_{\frac{n-f}{n-f-1}} \frac{M-m}{\epsilon}
%\end{equation}
%where $M$ and $m$ is the upper bound and the lower bound on the inputs, respectively.
\end{proofSketch}

\section{Summary}

This paper addresses consensus problems in the presence of crash faults, where the underlying communication networks may be incomplete. We explore exact and approximate consensus algorithms in synchronous and asynchronous systems, respectively. We prove \textit{tight} conditions for the graphs to be able to solve these consensus problems.

\bibliography{paperlist}

\appendix

\section{Proof of Theorem \ref{thm:mvc}}
\label{a:mvc}

\noindent {\bf Theorem \ref{thm:mvc}}
Algorithm MVC is correct in all the graphs that satisfy $f$ CT node connectivity.

~

\begin{proof}
The termination property is obvious. Now, we prove the two other properties. Suppose that the graph $G(\sv, \se)$ satisfies the condition stated in Theorem \ref{thm:nc}. Let $v_i^{end}[l]$ be the value $v_i[l]$ at node $i$ after the INNER-LOOP is completed in some OUTER-LOOP iteration $l$.

\begin{claim}
\label{claim:agree}
For all nodes $i, j$ that have not crashed in OUTER-LOOP iteration $l$, $v_i^{end}[l] = v_j^{end}[l]$.
\end{claim}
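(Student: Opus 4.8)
The plan is to observe that, within a fixed OUTER-LOOP iteration $l$, the dynamics of the binary variable $v_i[l]$ produced by the INNER-LOOP are \emph{identical} to those of Algorithm Min-Max run on the binary values assigned in STEP~I. Indeed, STEP~I initializes each $v_i[l]$ to $0$ or $1$; the INNER-LOOP then alternates Min-Phases (even $p$) and Max-Phases (odd $p$), each repeated $d$ times, updating $v_i[l]$ to the minimum (resp.\ maximum) of the first coordinates of the received tuples, exactly as in Algorithm Min-Max. Step~4 of the INNER-LOOP only touches the tentative state $t_i[l]$ and never reads or writes $v_i[l]$, so it is irrelevant to the evolution of $v_i[l]$. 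Hence the entire claim reduces to re-running the agreement argument already used to prove the correctness of Algorithm Min-Max, now restricted to the execution of INNER-LOOP iteration $l$.

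Second, I would re-establish the ``clean pair of phases'' step in this restricted setting. Since at most $f$ nodes crash during the \emph{whole} execution, at most $f$ of them crash during iteration $l$, and each crash can affect at most one phase of iteration $l$'s INNER-LOOP. Grouping the $2f+2$ phases into the $f+1$ consecutive pairs $(p_1,p_2), (p_3,p_4), \dots, (p_{2f+1},p_{2f+2})$ --- each a Min-Phase followed by a Max-Phase --- the pigeonhole principle guarantees a pair $(p_t, p_{t+1})$ during which no node crashes. Let $F$ be the set of nodes that have crashed by the end of phase $p_{t-1}$; then $|F| \leq f$, every node in $\sv - F$ is alive at the start of $p_t$, and, because no crash occurs in $p_t$ or $p_{t+1}$, remains alive through the end of $p_{t+1}$. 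By $f$ CT node connectivity, $G_F$ has a source (Definition \ref{def:source}).

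Third, I would replay the Case~I / Case~II analysis of the Min-Max proof on $G_F$: if some source $s$ of $G_F$ holds $v_s = 0$ at the start of the Min-Phase $p_t$, then the $d$-fold flooding along a directed rooted spanning tree rooted at $s$ drives every node in $\sv - F$ to $0$ by the end of $p_t$; otherwise every source holds $1$ at the start of $p_t$, this value is preserved at the sources throughout the Min-Phase (any $0$ that could reach a source would have to come from a node which is itself a source, contradicting that all sources hold $1$), and the subsequent Max-Phase $p_{t+1}$ then drives every node in $\sv - F$ to $1$. Either way, by the end of $p_{t+1}$ all nodes alive at that moment share a common binary value $b$. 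It then remains to note that once all surviving nodes agree on $b$, every later Min- or Max-Phase leaves their value unchanged, since the only values a surviving node can receive (from a surviving node or from itself) are equal to $b$; hence $v_i^{end}[l] = b$ for every node $i$ that completes iteration $l$, which is exactly the claim.

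The main obstacle is bookkeeping the crash pattern rather than any new idea: I must ensure the pigeonhole is applied to the crashes \emph{occurring within iteration $l$} (using that the global crash budget is $f$), and that $F$ is taken at the correct boundary (the end of $p_{t-1}$) so that the sources of $G_F$ survive the clean pair of phases. With those details pinned down, the agreement of the values $v^{end}[l]$ is precisely the agreement property of Algorithm Min-Max specialized to iteration $l$.
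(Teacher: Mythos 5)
Your proposal is correct and takes essentially the same approach as the paper: the paper's proof is a one-line reduction observing that, because step 4 of the INNER-LOOP never reads or writes $v_i[l]$, the INNER-LOOP is exactly Algorithm Min-Max, whose agreement property yields the claim, and you simply inline that Min-Max argument (pigeonhole over the $f+1$ disjoint phase pairs, the source-based Case I/Case II analysis, and persistence of agreement afterwards). The only slip is cosmetic: under the paper's parity convention (even $p$ is a Min-Phase), each pair $(p_{2k-1},p_{2k})$ is Max-then-Min rather than Min-then-Max, but your argument is symmetric under exchanging min/max and $0$/$1$, which is precisely the ``without loss of generality'' the paper itself invokes.
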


\begin{proof}
This is due to the correctness of Algorithm Min-Max, since if we ignore the code related to $t_i$'s, then the INNER-LOOP is essentially equal to Algorithm Min-Max.
\end{proof}

We will use Claim \ref{claim:agree} to prove the agreement property. In the proof below, we will say that a node {\em exits} an OUTER-LOOP iteration $l$ if it has $v_i^{end}[l] = 0$; otherwise, a node is said to {\em complete} the iteration $l$.

\begin{lemma}
\label{lemma:mvc-agree}
Algorithm MVC satisfies the agreement property in $G$.
\end{lemma}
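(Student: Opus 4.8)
The plan is to use Claim \ref{claim:agree} to show that all fault-free nodes march through the OUTER-LOOP in lockstep, making identical exit/complete decisions, and hence produce identical outputs. The key starting observation is that a fault-free node never crashes, so it is among the non-crashed nodes in every OUTER-LOOP iteration it executes; consequently Claim \ref{claim:agree} applies to every pair of fault-free nodes, and whenever two of them both finish the INNER-LOOP of iteration $l$ they hold the same binary value $v_i^{end}[l] = v_j^{end}[l]$.

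Next I would formalize the lockstep behavior by induction on $l$, with the invariant: at the start of iteration $l$, either (a) every fault-free node enters iteration $l$, or (b) every fault-free node has already exited at one common earlier iteration. The base case $l = 0$ is immediate, since all fault-free nodes enter iteration $0$. For the inductive step, if (b) already holds it persists; if (a) holds, then by the observation above all fault-free nodes share a common value $v^{end}[l]$. If that common value is $0$, STEP III makes every fault-free node exit at iteration $l$ (establishing (b) for $l+1$); if it is $1$, every fault-free node completes iteration $l$ and enters $l+1$ (re-establishing (a)). Thus the fault-free nodes either all exit at a single common iteration $l^*$---each then outputting $l^*$---or none ever exits, so they all run through iteration $K$ and each outputs $K$. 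In either case their outputs coincide, which is exactly the agreement property.

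The anticipated difficulties are bookkeeping rather than conceptual. I must confirm that the non-crashed set referenced in Claim \ref{claim:agree} contains all fault-free nodes (immediate, as fault-free nodes never crash), and that the crashes seen by any single INNER-LOOP stay within the $f$ budget that Claim \ref{claim:agree} relies on; this holds because at most $f$ nodes crash in the entire execution, so a fortiori at most $f$ crash during any one INNER-LOOP. The conceptually hard part---forcing agreement on the binary $v$-values inside a single iteration---has already been discharged by Claim \ref{claim:agree} via the correctness of Algorithm Min-Max, leaving only the lockstep induction above together with the boundary case of outputting $K$.
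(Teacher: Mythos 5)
Your proof is correct and takes essentially the same approach as the paper's: both use Claim \ref{claim:agree} to conclude that all non-crashed (hence all fault-free) nodes make identical exit/complete decisions in each OUTER-LOOP iteration, so they either all exit at the same iteration $l$ or all complete iteration $K$ and output $K$. The paper compresses this into two sentences; your induction on $l$ simply formalizes the same lockstep argument.
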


\begin{proof}
By Claim \ref{claim:agree}, all the nodes that have not crashed will either exit the OUTER-LOOP in the same iteration $l$ or complete OUTER-LOOP iteration $K$. Thus, all the fault-free nodes will have the same output $l$.
\end{proof}

~

To prove the validity property, we first introduce some notations, and prove useful lemma and claims. Let $t_i^{begin}[l]$ be the value $t_i[l]$ at node $i$ at the \textit{beginning} of some OUTER-LOOP iteration $l$, and let $t_i^{end}[l]$ be the value $t_i[l]$ at node $i$ at the \textit{end} of OUTER-LOOP iteration $l$. Let $v_i^{begin}[l]$ be the value $v_i[l]$ at node $i$ after STEP I of the OUTER-LOOP iteration $l$. Thus, if $t_i^{begin}[l] = l$, then $v_i^{begin}[l] = 0$; otherwise, $v_i^{begin}[l] = 1$. %Denote by $\sv^{begin}[l]$ the nodes that have participated in the INNER-LOOP. Similarly, denote by $\sv^{end}[l]$ the nodes that have finished the INNER-LOOP in iteration $l$. 

\begin{lemma}
\label{lemma:ti>l}
In an OUTER-LOOP iteration $l~~(0 \leq l < K)$, for each node $i \in \sv$ that has not crashed, and has $v_i[l] = 1$, then $t_i[l] > l$.
%In an OUTER-LOOP iteration $l~~(0 \leq l < K)$, if $v_i^{end}[l] = 1$, then $t_i^{end}[l] > l$ for all $i \in \sv^{end}[l]$.
\end{lemma}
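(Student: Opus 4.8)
The plan is to prove a slightly stronger statement by a nested induction, where the outer induction is on the OUTER-LOOP index $l$ and the inner induction is on the broadcast--receive--update micro-steps executed inside STEP~II. Concretely, for a fixed iteration $l$ I will maintain the invariant: \emph{at every point after STEP~I of iteration $l$, every non-crashed node $i$ with $v_i[l]=1$ satisfies $t_i[l]>l$}. The outer induction supplies the entry fact $t_i^{begin}[l]\geq l$, and the inner induction shows the invariant survives each micro-step of the INNER-LOOP.

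First I would establish the entry condition $t_i^{begin}[l]\geq l$. For $l=0$ this is immediate, since $t_i[0]$ is an input and all inputs lie in $\{0,\ldots,K\}$. For $l>0$, a node $i$ reaches iteration $l$ only if it did not exit iteration $l-1$, i.e.\ $v_i^{end}[l-1]=1$; applying the lemma (the outer induction hypothesis) to iteration $l-1$ gives $t_i^{end}[l-1]>l-1$, and since the states are integer-valued this yields $t_i^{begin}[l]=t_i^{end}[l-1]\geq l$. This settles the base case of the inner induction: right after STEP~I, $v_i[l]=1$ means $t_i[l]\neq l$, and combined with $t_i[l]=t_i^{begin}[l]\geq l$ we obtain $t_i[l]>l$.

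Next I would carry out the inner inductive step over the micro-steps (steps 1--4, repeated $d$ times inside each of the $2f+2$ phases). The key observation is that step~4 can only \emph{raise} $t_i[l]$ to a value exceeding $l$ or leave it unchanged, because it assigns the smallest received candidate $\min\{b:(*,b)\in B_i,\ b>l\}$ whenever such a candidate exists. Hence it suffices to check that whenever step~3 leaves $v_i[l]=1$, the set $\{b:(*,b)\in B_i,\ b>l\}$ is non-empty. In a Max-Phase, $v_i[l]=1$ forces some tuple $(1,b^{*})\in B_i$ broadcast by a live node $k$ with $v_k[l]=1$; by the inner induction hypothesis applied to $k$'s pre-round state, $b^{*}=t_k[l]>l$, so the witness lies in the set. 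In a Min-Phase, $v_i[l]=1$ forces \emph{all} tuples in $B_i$ to have first coordinate $1$, in particular node $i$'s own tuple (recall $i\in N_i^-$), whose second coordinate was $>l$ by the hypothesis; this supplies the witness. In either case step~4 forces $t_i[l]>l$, preserving the invariant.

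I expect the main obstacle to be the bookkeeping of the interaction between the two state variables across a single micro-step: $v_i[l]$ and $t_i[l]$ are updated from the \emph{same} received set $B_i$ (steps 3 and 4), so the argument must locate inside $B_i$ a witness tuple that simultaneously explains why $v_i[l]$ stayed or became $1$ and why $t_i[l]$ is forced above $l$. The Min- and Max-phase cases require different witnesses (node $i$'s own tuple in the Min case, an incoming $v=1$ tuple in the Max case), and one must be careful that the witness reflects a pre-round value to which the induction hypothesis legitimately applies. Once the witness is identified, the conclusion is immediate.
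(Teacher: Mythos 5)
Your proof is correct, and it actually establishes the same strengthened ``at every point of the iteration'' form that the paper's own proof works with (its Claims~\ref{claim:ti>l-1} and \ref{claim:ti>l-2}), but the inner mechanics differ. Both arguments share the outer induction on $l$ and the same core insight: a tuple with first coordinate $1$ always carries a second coordinate exceeding $l$, so a node that adopts $v_i[l]=1$ simultaneously acquires, via step~4, a tentative value above $l$. The paper proves this by a minimal-counterexample argument: it supposes some node flips $v$ from $0$ to $1$ upon receiving $(1,0)$, takes the \emph{first} Max-Phase in which this happens, traces the chain of senders of $(1,0)$ back to an origin node $s$, and shows $s$ must already have had $v_s=1$ and $t_s=0$ right after STEP~I, contradicting the STEP~I assignment; the induction step for $l>0$ is then dispatched as ``similar logic.'' You instead run a forward induction over the broadcast--receive--update micro-rounds and check the invariant locally: in a Max-Phase the received tuple $(1,b^*)$ is itself the witness with $b^*>l$ (the hypothesis applied to its sender's pre-round state), and in a Min-Phase node $i$'s own echoed tuple is the witness. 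Your route makes explicit two things the paper leaves implicit --- the entry fact $t_i^{begin}[l]\ge l$ (outer hypothesis plus integrality of the inputs), which is exactly what turns STEP~I's ``$t_i\ne l$'' into ``$t_i>l$,'' and the Min-Phase case --- while the paper's route buys brevity by reducing everything to the first offending Max-Phase. Two cosmetic repairs to your write-up: step~4 does not only ``raise'' $t_i[l]$, since it can lower it to the minimum received candidate, but never to a value $\le l$, which is all you use; and the invariant should be checkpointed at micro-round boundaries (after steps~3 and~4 jointly), because between step~3 and step~4 a node can transiently hold $v_i[l]=1$ with a stale $t_i[l]$.
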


\begin{proof}
The proof is by induction on OUTER-LOOP iterations.

\noindent {\em Induction Basis}: $l = 0$.

We first prove the following claim.

\begin{claim}
\label{claim:ti>l-1}
At any point of time, for each node $i$ that has not crashed, and has $v_i[0] = 1$, then $t_i[0] > 0$.
\end{claim}

\begin{proof}
First, we prove the following claim: each node $i$ will change $v_i[0]$ from $0$ to $1$ if and only if it receives $(1, x)$ from its incoming neighbor such that $x > 0$. 
The proof is by contradiction. Consider the first Max-Phase $p$ (of the INNER-LOOP) in which some node $i$ changes $v_i[0]$ from $0$ to $1$, because $i$ has received $(1, 0)$ from its incoming neighbors.
Then, consider a chain of nodes propagating the tuple $(1,0)$ from some node $s$ to node $i$ such that node $s$ has $v_s[0] = 1$ and $t_s[0] = 0$ at the beginning of the Max-Phase $p$. Note that by assumption of $p$, node $s$ has never received $(1,0)$ from other nodes before Max-Phase $p$. Moreover, node $s$ has also never received $(1, x)$ such that $x > 0$ from other nodes before Max-Phase $p$, since otherwise, $t_s[0]$ would be updated to $x$ at step 4 of the INNER-LOOP. These two observations imply that $v_s[0] = 1$ and $t_s[0] = 0$ before entering the INNER-LOOP, i.e., after line 1 of the OUTER-LOOP is executed. This is a contradiction.

Second, Claim \ref{claim:ti>l-1} follows directly from the claim above.

%Note that $(1,0)$ is the only possible tuple propagated in this chain, since otherwise, either some node $j$ in the chain will not have $v_j[0] = 1$ or $t_j[0]$

%The proof is by contradiction. Observe that there exists a node $s$ that has $v_i[0] = 1$ and $t_i[0] > 0$ at the first line of the OUTER-LOOP iteration $0$. Then, 

\end{proof}

\comment{+++++++++++
\begin{proof}
Consider two cases:

\begin{itemize}
\item $v_i[0] = 1$ at the first line of the OUTER-LOOP iteration $0$: 

By construction, $t_i[0] \neq 0$. Since $t_i[0] \neq 0$, line 4 of the INNER-LOOP will never be executed at node $i$. Thus, the claim holds.

\item $v_i[0] = 0$ at the first line of the OUTER-LOOP iteration $0$: 

Observe that node $i$ will change $v_i[0]$ from $0$ to $1$ if and only if it receives $(1, x)$ from its incoming neighbor. Then, it is easy to see that in this case, $x > 0$. 

%Observe that for any node $j$ that changes $v_j[0]$ from $0$ to $1$ at line 3 of the INNER-LOOP, either $t_j[0] > 0$ or $t_j[0]$ is changed from $0$ to some value $> 0$ at line 4 of the INNER-LOOP.

%Observe that at the first line of the OUTER-LOOP iteration $0$, there exists some node $j$ that has $t_j[0] = 1$, since otherwise, line 4 of the INNER-LOOP will never be executed at node $i$, and $v_i$ will stay at $0$ at all the time. Now, consider the first time when node $i$ changes $v_i[0]$ from $0$ to $1$.
\end{itemize}
\end{proof}
++++++++++}
\comment{++++++++++ old long induction proof+++++++++
\begin{proof}
The proof is by induction on INNER-LOOP iterations (Phases). Initially, the statement holds by construction. Suppose that the statement holds through the first $p$ phases. Now, if the $(p+1)$-th phase is the Min-Phase, then the statement holds due to the induction hypothesis, since in the Min-Phase, no node will change $v[0]$ to $1$. Consider the case when the $(p+1)$-th phase is the Max-Phase. In this phase, for any node $i$ that changes $v_i[0]$ from $0$ to $1$, there must exist a chain of nodes that propagate value $(1,*)$ to node $i$ such that in the beginning of the $(p+1)$-th phase, the first node of the chain (say node $j$) has $v_j[0] = 1$, and the rest of the nodes in the chain have $v[0] = 0$. Then, by the induction hypothesis, $t_j[0] > 0$. This implies that by the time node $i$ receives $(1,*)$ from node $j$, it will also receive either $t_j[0]$ or some value that is $> 0$ (since the only reason that $t_j[0]$ is not propagated through the chain of nodes is some node $u$ in the chain already has $t_u[0] > 0$). In other words, by the time node $i$ receives $(1,*)$, $\min\{b~|~(*,b)\in B_i,~b > l\} \neq \emptyset$. Consequently, node $i$ will be able to update $t_i[0]$, and thus, $t_i[0] > 0$.
\end{proof}
+++++++++}
Claim \ref{claim:ti>l-1} implies that the statement of Lemma \ref{lemma:ti>l} holds for the base case ($l = 0$).

~

\noindent {\em Induction Step}: Suppose that for all OUTER-LOOP iteration $l \geq r$, the statement of Lemma \ref{lemma:ti>l} holds. Consider the $(r+1)$-th OUTER-LOOP iteration. We can prove the following claim based on similar logic as in the base case and the induction hypothesis.

\begin{claim}
\label{claim:ti>l-2}
At any point of time, for each node $i$ that has not crashed and has $v_i[r+1] = 1$, then $t_i[r+1] > r+1$.
\end{claim}

This claim completes the proof of induction step. Thus, Lemma \ref{lemma:ti>l} is proved.
\end{proof}

\begin{claim}
\label{claim:ti}
At any point of time in an OUTER-LOOP iteration $l$, if node $i$ has not crashed, then  $t_i[l]$ equals an input at some node.
\end{claim}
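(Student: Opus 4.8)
The plan is to prove a stronger \emph{global invariant}: every value that is ever assigned to any variable $t_j$ during the execution of Algorithm MVC (including the value held at the instant a node crashes, and including values carried across OUTER-LOOP iterations) equals the input of some node in $\sv$. Claim~\ref{claim:ti} then follows immediately by specializing the invariant to node $i$ at an arbitrary point during iteration $l$.

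The base case is immediate: the initialization line sets $t_i[0]$ to the input at node $i$, so the very first value assigned to each $t_j$ equals $j$'s own input, and the invariant holds. For the inductive step, I would first observe that the \emph{only} place in Algorithm MVC where a $t$-variable is modified is step~4 of the INNER-LOOP, where $t_i[l]$ is reassigned to $\min\{b \mid (*,b) \in B_i,\ b > l\}$ whenever this set is non-empty. (STEP~I and STEP~III touch only $v_i$; the transition between consecutive OUTER-LOOP iterations merely carries the value over, with $t_i^{begin}[l+1] = t_i^{end}[l]$, and modifies nothing.) Every element $b$ of the set above is the second coordinate of some tuple $(a,b) \in B_i$. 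By steps~1 and~2, each such tuple was broadcast either by node $i$ itself or by an incoming neighbor of $i$, and its second coordinate is exactly the value that that node's $t$-variable held immediately \emph{before} the broadcast. Invoking the inductive hypothesis on that earlier assigned value shows $b$ equals the input of some node; since the new value of $t_i[l]$ is one such $b$, the invariant is preserved.

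Phrasing the invariant in terms of \emph{all assigned values} (rather than only the values currently held by non-crashed nodes) is what lets the argument tolerate crashes: even if the broadcasting neighbor subsequently fails, the value $b$ it sent was, at broadcast time, a legitimately assigned $t$-value, hence an input by the inductive hypothesis, and reliable FIFO delivery ensures node $i$ receives exactly that value.

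The only point requiring care — and where I expect the main, though modest, obstacle to lie — is making the induction on time well-founded. Because the system is synchronous, I would order the step~4 assignments by the sub-round (one of the $d$ repetitions within a phase) in which they occur, and note that within each sub-round the broadcasts of step~1 strictly precede the receptions of step~2. Hence every $b$ used in a step~4 assignment of a given sub-round was already an assigned $t$-value at the end of the previous sub-round (or at initialization), so the induction never references the quantity it is defining. With this ordering fixed, the invariant propagates through all sub-rounds, phases, and OUTER-LOOP iterations, and Claim~\ref{claim:ti} follows.
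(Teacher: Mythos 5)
Your proof is correct and takes essentially the same approach as the paper, which disposes of this claim in one sentence (``holds by construction, since all the $t$'s propagated are initially some node's input''). Your global invariant with its induction on sub-rounds is simply a rigorous spelling-out of that ``by construction'' argument, with the well-foundedness and crash-tolerance details made explicit.
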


\begin{proof}
This claim holds by construction, since all the $t$'s propagated are initially some node's input.
\end{proof}

\begin{claim}
\label{claim:output}
If any node $i$ exits OUTER-LOOP iteration $l$ and outputs $l$, then there must exist some node $j$ such that $t_j^{begin}[l] = l$.
\end{claim}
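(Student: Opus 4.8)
The plan is to exploit the fact that the INNER-LOOP is a Min--Max computation over the \emph{binary} variables $v_i[l]$, combined with STEP~I of the OUTER-LOOP, which links $v_i^{begin}[l]$ to the tentative state $t_i^{begin}[l]$. The central observation is purely value-theoretic: in step~3 of the INNER-LOOP the assignment to $v_i[l]$ is always $\min$ or $\max$ of the set of received values, and each received value was held and broadcast by some node. Hence no value can ever surface in the system during iteration $l$ unless it was already present when the iteration began.

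First I would formalize this as an invariant: at every point during OUTER-LOOP iteration $l$, the set of values held by the $v$-variables of all nodes (alive or crashed) is contained in $\{\, v_k^{begin}[l] \mid k \in \sv \,\}$. This follows by a straightforward induction over the phases of the INNER-LOOP and over the $d$ rounds within each phase. The base case is the value of $v_i[l]$ immediately after STEP~I. For the inductive step, note that $\min$ and $\max$ each return one of their arguments; every argument is an entry $a$ of some received tuple $(a,*) \in B_i$, which was broadcast by a node while holding value $a$, so by the induction hypothesis $a$ lies in the initial set. Crashes cause no difficulty, since a node that crashes can only have forwarded values it actually held, which are themselves covered by the invariant.

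Next I would apply the invariant at the end of the INNER-LOOP. If node $i$ exits iteration $l$ and outputs $l$, then by definition $v_i^{end}[l] = 0$, so the value $0$ is present in the system at the end of the iteration. By the invariant, $0 \in \{\, v_k^{begin}[l] \mid k \in \sv \,\}$, i.e., some node $j$ had $v_j^{begin}[l] = 0$. Finally, STEP~I sets $v_j^{begin}[l] = 0$ if and only if $t_j^{begin}[l] = l$; therefore $t_j^{begin}[l] = l$, which is exactly the claim.

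I do not expect a serious obstacle. The only point requiring care is the bookkeeping around crashed nodes: a $0$ that first appears late in the INNER-LOOP may have been forwarded by a node that subsequently crashes, so the invariant must be stated over \emph{all} nodes and \emph{all} time points rather than only over the currently-alive nodes. Stating it this way ensures that any such late-appearing $0$ is still traced back to an initial $0$, and the remainder of the argument is immediate.
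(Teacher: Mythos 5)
Your proof is correct and takes essentially the same route as the paper: the paper argues by contradiction that if no node had $t_j^{begin}[l] = l$, then every node would start the INNER-LOOP with $v_j^{begin}[l] = 1$ and, by the validity of Algorithm Min-Max, would end with $v_j^{end}[l] = 1$, so no node could exit; your invariant is precisely that validity property, proved inline by induction on the min/max updates and applied in contrapositive form. The only difference is presentational --- you establish the ``min/max cannot create a value not already present'' fact explicitly rather than citing Min-Max's validity as a black box.
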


\begin{proof}
Suppose by way of contradiction that every node $j$ that has not crashed has $t_j^{begin}[l] \neq l$, and node $i$ exits iteration $l$. The first assumption implies that every node $j$ has $v_j^{begin}[l] = 1$. Due to the validity of Algorithm Min-Max, every node $j$ that has not crashed after completing INNER-LOOP has $v_j^{end}[l] = 1$. Therefore, no node will exit iteration $l$, a contradiction.
\end{proof}

Now, we are ready to prove the key lemma.

\begin{lemma}
\label{lemma:mvc-validity}
Algorithm MVC satisfies the validity property in $G$.
\end{lemma}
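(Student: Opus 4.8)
The plan is to show that the common output of the fault-free nodes always equals some node's input, leveraging the four preceding results. First I would invoke the agreement property (Lemma~\ref{lemma:mvc-agree}): every fault-free node terminates with the same value, call it $l^*$ with $0 \le l^* \le K$. Because the nodes agree, they either all \emph{exit} one common OUTER-LOOP iteration or they all \emph{complete} iteration $K$ and output $K$; it therefore suffices to show $l^*$ is an input in each of these two scenarios. I would also note that any output strictly smaller than $K$ can be produced only by an early exit (a loop completion always outputs $K$), so the two scenarios are genuinely exhaustive.

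In the exit scenario some fault-free node $i$ exits at iteration $l^*$, i.e. $v_i^{end}[l^*] = 0$ and its output is $l^*$. Here I would apply Claim~\ref{claim:output} directly: there exists a node $j$ with $t_j^{begin}[l^*] = l^*$. Combining this with Claim~\ref{claim:ti}, which guarantees that $t_j^{begin}[l^*]$ always equals some node's input, shows that $l^*$ is a genuine input, so validity holds for every $l^* < K$ and for any $l^* = K$ reached by exiting.

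The remaining, more delicate scenario is when every fault-free node completes the loop and outputs $l^* = K$ without ever exiting. Fix a fault-free node $i$. Since no node exits iteration $K-1$ (otherwise agreement would force the common output to be $K-1 \ne K$), we have $v_i^{end}[K-1] = 1$. Lemma~\ref{lemma:ti>l}, applicable because $K-1 < K$, then yields $t_i^{end}[K-1] > K-1$. I would close the gap using the input range: by Claim~\ref{claim:ti} the value $t_i^{end}[K-1]$ equals some node's input, hence is an integer in $[0,K]$; being strictly greater than $K-1$ and at most $K$, it must equal $K$. Since the tentative state carries over between iterations ($t_i^{begin}[K] = t_i^{end}[K-1]$), this exhibits $K$ as some node's input, establishing validity.

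The main obstacle I anticipate is precisely this last scenario: unlike an exit, a loop completion is not covered by Claim~\ref{claim:output}, so I must reason about the tentative state independently. The crux is turning the qualitative bound $t_i^{end}[K-1] > K-1$ into the exact equality $t_i^{end}[K-1] = K$, which relies on inputs being integers confined to $[0,K]$ together with the persistence of $t_i$ across the iteration boundary. I would also take care to justify the reduction in the first paragraph—that agreement forces the two clean scenarios and that outputs below $K$ arise only from exits—so that the case analysis is airtight.
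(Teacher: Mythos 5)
Your proof is correct, but it is organized around a different case split than the paper's, and the completion case is argued in the opposite direction. The paper splits on the \emph{inputs}: either some node has input $K$ (in which case an output of $K$ is trivially valid, and an exit is covered by Claims \ref{claim:ti} and \ref{claim:output}), or all inputs are at most some $K' < K$, in which case the paper proves Claim \ref{claim:K'} -- fault-free nodes must actually \emph{exit} by iteration $K'$ -- so the ``complete without exiting'' scenario never arises; establishing that claim requires one extra ingredient, namely the validity property of Algorithm Min-Max (all nodes entering iteration $K'$ with $v_i^{begin}[K']=0$ forces $v_i^{end}[K']=0$). You instead split on the \emph{execution outcome} (exit versus completion) and handle completion contrapositively: $v_i^{end}[K-1]=1$ forces $t_i^{end}[K-1]>K-1$ by Lemma \ref{lemma:ti>l}, and Claim \ref{claim:ti} plus the fact that inputs are integers in $[0,K]$ pins $t_i^{end}[K-1]=K$, so $K$ is itself an input. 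Your route is leaner -- it proves exactly what validity needs, dispenses with Claim \ref{claim:K'} and the appeal to Min-Max validity, and handles the two scenarios symmetrically -- while the paper's route yields a slightly stronger structural fact (when $K$ is not an input, nodes provably exit no later than iteration $K'$). Both arguments share the same core chain (Lemma \ref{lemma:mvc-agree}, Lemma \ref{lemma:ti>l}, Claims \ref{claim:ti} and \ref{claim:output}) and both implicitly rely on inputs being integers in $\{0,\dots,K\}$, an assumption built into the algorithm's loop structure; your explicit flagging of that integrality step is a point where your write-up is actually more careful than the paper's.
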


\begin{proof}
Consider two cases:

\begin{itemize}
\item Some node has input $K$:

In this case, suppose that all the fault-free nodes exit the OUTER-LOOP iteration $l \leq K$ and output $l$. Then, by Claims \ref{claim:ti} and \ref{claim:output}, the validity property holds. Suppose that no fault-free node exits the OUTER-LOOP, i.e., for all $i$ that has not crashed, $v_i^{end}[K] = 1$. In this case, the validity property still holds, since all the fault-free nodes will output $K$, and by assumption, some node has input $K$.

\item No node has input $K$:

Assume that all the nodes have input $\leq K'$, where $K' < K$. In this case, we show the following claim.

\begin{claim}
\label{claim:K'}
All the fault-free nodes will exit the OUTER-LOOP in some iteration $l \leq K'$.
\end{claim}
\begin{proof}
If fault-free nodes exit during some OUTER-LOOP iteration $l < K'$, then the proof is done. Suppose not. Then, in iteration $K'-1$, every node $i$ that has not crashed has $v_i^{end}[K'-1] = 1$. Consequently, by Lemma \ref{lemma:ti>l}, every node that has not crashed has $t_i^{end}[K'-1] > K' - 1$. This observation together with Claim \ref{claim:ti} and the assumption that the input is bounded by $K'$ imply that $t_i^{end}[K'-1] = K'$. Therefore, in the beginning of iteration $K'$, every node that has not crashed has $t_i^{begin}[K'] = K'$ and $v_i^{begin}[K'] = 0$. Then, due to the validity property of Algorithm Min-Max, every node $i$ that has not crashed has $v_i^{end}[K'] = 0$. Therefore, every fault-free node will exit the OUTER-LOOP in iteration $K'$.
\end{proof}

Claims \ref{claim:ti}, \ref{claim:output} and \ref{claim:K'} together prove the validity property.
\end{itemize}
\end{proof}

Lemmas \ref{lemma:mvc-agree} and \ref{lemma:mvc-validity} prove Theorem \ref{thm:mvc}.
\end{proof}

\section{Proof of Theorem \ref{thm:fixed}}
\label{a:fix}

\noindent {\bf Theorem \ref{thm:fixed}}
In general, it is impossible to solve consensus using {\em fixed iterative algorithms} in anonymous systems and networks.

~

\begin{proof}
We prove the theorem by showing a counter example. We present a directed graph that satisfies $f$ CT node connectivity, and show that no fixed transition function solves consensus. 

Consider a directed graph $G$ consisting of three parts: (i) a clique of size $f+1$, (ii) a source node $s$ that has an outgoing edge to every node in the clique, and (iii) a leaf node $l$ that has an incoming edge from every node in the clique. Note that there is no incoming edge to $s$, and no outgoing edge from $l$. Moreover, edge $(s,l)$ is not an edge in $G$. Obviously, the graph satisfies $f$ CT node connectivity, since (i) if $s \in F$, then at least one node in the clique is the source in the reduced graph $G_F$; (ii) if $s \not\in F$, then $s$ is the source in $G_F$.

Suppose that each node uses the transition function $Z$. First, we look at how $Z$ maps to a value when a node receives exactly $f+1$ values. Recall that $R_i[t]$ denotes the set of values received by $i$ at iteration $t$. It is clear that if $R_i[t]$ contains all $0$'s or all $1$'s, then $Z(R_i[t])$ should map to $0$ or $1$, respectively; otherwise, either validity or agreement property is violated. This implies the following claim:

\begin{claim}
\label{claim:0to1}
There must exist a pair of set of $2f+1$ values $R^0$ and $R^1$ such that (i) $|R^0| = |R^1| = f+1$; (ii) there is exactly one more $1$ in $R^1$ than in $R^0$, i.e., suppose $R^0$ contains $a$ $0$'s and $(f+1-a)$ $1$'s, then $R^1$ contains $(a-1)$ $0$'s and $(f+2-a)$ $1$'s; and (iii) $Z(R^0) = 0$ and $Z(R^1) = 1$. 
\end{claim}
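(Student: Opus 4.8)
The plan is to prove the claim by a discrete intermediate-value (pigeonhole) argument applied to the fixed transition function $Z$ along a chain of multisets of size $f+1$ that interpolates between the all-$0$ and the all-$1$ inputs. First I would record the two anchor points already supplied by the surrounding discussion: the validity and agreement properties force $Z(\,\text{all }0\,) = 0$ and $Z(\,\text{all }1\,) = 1$, where each argument is the multiset consisting of $f+1$ identical values received by a node in the clique.

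Next I would build the interpolating chain. For $0 \le k \le f+1$, let $R_k$ be the multiset of size $f+1$ with exactly $k$ copies of $1$ and $(f+1-k)$ copies of $0$. Then $R_0$ is the all-$0$ multiset and $R_{f+1}$ is the all-$1$ multiset, so by the anchor points $Z(R_0) = 0$ and $Z(R_{f+1}) = 1$. The crucial structural feature, which makes the chain usable, is that consecutive entries differ in a single element: $R_{k+1}$ contains exactly one more $1$ (and one fewer $0$) than $R_k$.

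Finally I would run the discrete intermediate-value theorem along this chain. Since the finite sequence $Z(R_0), Z(R_1), \dots, Z(R_{f+1})$ starts at $0$ and ends at $1$, there is a smallest index $k$ for which $Z(R_k) = 0$ but $Z(R_{k+1}) = 1$; taking $R^0 := R_k$ and $R^1 := R_{k+1}$ then yields multisets satisfying (i)--(iii) verbatim, with the ``one more $1$'' requirement being precisely the single-element difference noted above. The only point demanding care is that the argument presupposes $Z$ takes values in $\{0,1\}$ on these binary inputs (so that the sequence genuinely flips from $0$ to $1$), which is exactly the bookkeeping---``$a$ zeros and $(f+1-a)$ ones''---that the claim records and which the binary-consensus setting of this counterexample guarantees. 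Granting this reading, no serious obstacle remains; the content lies entirely in the clean setup of the interpolating chain and the observation that adjacent multisets differ by a single value.
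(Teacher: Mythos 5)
Your proposal is correct and matches the paper's (largely implicit) argument: the paper establishes the same two anchor points---$Z$ must map the all-$0$ multiset of size $f+1$ to $0$ and the all-$1$ multiset to $1$ by validity/agreement---and then asserts the claim, which is exactly your discrete intermediate-value walk along the chain of multisets differing in one element. You merely spell out the step the paper leaves as ``this implies,'' including the minor caveat that $Z$ must be $\{0,1\}$-valued on these inputs, so no further comment is needed.
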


Denote by $R$ the set of $f$ $0$'s and one $1$, and $R'$ the set of $f$ $1$'s and one $0$. Claim \ref{claim:0to1} implies that there are three possible cases. In all the cases below, we consider an execution of the algorithm where (i) a single node in the clique crashes before the algorithm starts; and (ii) no other node crashes throughout the execution. The inputs at each node is described in each case below.

\begin{itemize}

\item $Z(R) = 0$:

Consider the case when the source $s$ has input $1$, and all the other nodes have input $0$. Since the source node does not receive any value, its state can only be $1$ throughout the execution. For each node in the clique in each iteration $t \geq 0$, it receives $f$ $0$'s and one $1$, and thus, state at each node in the clique can only be $0$, since $Z(R) = 0$. Thus, the agreement property is violated. 

\item $Z(R) = 1$ and $Z(R') = 0$:

Consider the case when the leaf $l$ has input $0$, and all the other nodes have input $1$. Since the source node does not receive any value, its state can only be $1$ throughout the execution. For each node in the clique in each iteration $t \geq 0$, it receives $f+1$ $1$'s, and thus, state at each node in the clique can only be $1$. As a result, in each iteration $t$, the leaf node $l$ receives $f$ $1$'s and one $0$, and thus the state at node $l$ is $0$ in each iteration, since $Z(R') = 0$. Thus, the agreement property is violated. 

\item $Z(R) = 1$ and $Z(R') = 1$:

Consider the case when the source $s$ has input $0$, and all the other nodes have input $1$. Since the source node does not receive any value, its state can only be $0$ throughout the execution. For each node in the clique in each iteration $t \geq 0$, it receives $f$ $1$'s and one $0$, and thus, state at each node in the clique can only be $1$, since $Z(R') = 1$. Thus, the agreement property is violated. 

\end{itemize}

\end{proof}

\comment{++++++

+++++}

\end{document}